\newtheorem{theorem}{Theorem}[section]
\newtheorem{lemma}{Lemma}[section]
\newtheorem{remark}{Remark}[section]
\newtheorem{problem}{Problem}[section]
\newtheorem{assumption}[lemma]{Assumption}
\numberwithin{equation}{section}
\DeclareMathOperator{\dn}{dn}
\newcommand{\C}{\mathbb{C}}
\newcommand{\R}{\mathbb{R}}
\newcommand{\D}{\mathcal{D}}
\newcommand{\I}{\mathcal{I}}
\def\d{{\rm d}}
\def \ov{\overline}
\def \e{{\rm e}}
\def \im{\, {\rm Im}}
\def \re{\, {\rm Re}}
\def \nn{\nonumber}
\def\id{\mathrm {Id}\,}
\def \pa{\partial}
\def\1{ \mathbb{I}}
\def\bea#1\eea{\begin{align}#1\end{align}}
\def\be#1\ee{\begin{align}#1\end{align}}
\newcommand\EN           {\end{equation}}
\newcommand\bes           {\begin{subequations}}
\newcommand\esu           {\end{subequations}}
\def\3pt#1#2#3{{\langle{#1}\vert{#2}\vert{#3}\rangle}}
\title{$\bar{\partial}$-problem for  focusing  nonlinear Schr\"odinger equation   and soliton shielding}
\author[1]{Marco Bertola}
\affil[1]{Concordia University, 1455 av. de Maisonneuve W.,  Montr\'eal Canada}
\affil[2]{SISSA, via Bonomea 265, 34136, Trieste, Italy  and INFN sezione di Trieste}
\author[2,3]{Tamara Grava}
\affil[3]{School of Mathematics, University of Bristol, Fry Building, Bristol,
BS8 1UG, UK}
\author[4]{Giuseppe Orsatti}
\affil[4]{Institute de Recherche en Mathématique et Physique, UCLouvain, Chemin du Cycloton 2, 1348 Louvain-la-Neuve, Belgium}
\begin{document}

\maketitle
\begin{abstract}
We consider  soliton gas solutions of the Focusing Nonlinear Schr\"odinger (NLS) equation, where the point spectrum of the Zakharov-Shabat linear operator condensate in a bounded domain $\D$ in the upper half-plane.  We show that the  corresponding  inverse  scattering  problem can be formulated as a $\ov \pa$-problem on the domain.  We prove the existence of the solution of this  $\ov \pa$-problem  by showing that  the  $\tau$-function of the problem (a Fredholm determinant) does not vanish. We  then represent the solution of the NLS equation via the $\tau$ of the $\ov\partial$- problem.   Finally we show  that, when the domain $\D$ is an ellipse and the density of solitons is analytic,  the initial datum of the Cauchy problem is asymptotically step-like oscillatory,  and it is  described by a  periodic elliptic function as $x \to - \infty$ while it  vanishes exponentially fast as $x \to +\infty$. 
\end{abstract}

\section{Introduction}
In this manuscript we consider  the focusing nonlinear Schr\"odinger equation  (NLS)
\begin{equation}
\label{NLS}
i\psi_t+\frac{1}{2}\psi_{xx}+|\psi|^2\psi=0,
\end{equation}
with nonstandard initial  data that originate from  the infinite soliton limit.  
The NLS equation is an integrable equation and the Cauchy problem  can be solved via the inverse scattering transform of the Zakharov-Shabat linear operator \cite{ZS1971}  when the initial data satisfies  zero boundary conditions  at infinity, it is periodic \cite{Faddeev2007} or  the initial data  has asymmetric boundary conditions  \cite{Biondini2014, Demontis}, or  also when one considers an initial boundary value problem instead of a Cauchy problem (see e.g. \cite{FIS2005},\cite{IS2013}).
In all these cases the inverse problem can be formulated as a Riemann-Hilbert Problem {\color{black} for a $2\times 2$  matrix function defined on the complex plane with discontinuities along paths and with a certain number of  poles. These paths and poles  correspond to the support of the   spectral scattering  data. 
Here   we consider instead the case in which the inverse problem is  necessarily  cast    as a  $\overline{\partial}$-problem because the spectral scattering data has a two-dimensional support.}
{\color{black}
The  $\overline{\partial}$-method  in inverse scattering was developed by Fokas-Ablowitz \cite{FA1,FB2}  and Beals-Coifman \cite{BC1,BC2} to study the Cauchy problem of integrable, dispersive nonlinear equations in two or more space dimensions. Its application to
inverse scattering has been studied by many authors  (see  for example 
the monograph \cite{AblowitzClarkson} for references up to 1990, and Grinevich, Grinevich-Novikov  \cite{Grin1,GrinNovi,GrinNovi2}).

 To formulate the  $\overline{\partial}$-problem for the NLS equation  let  $\mathcal D$ denote a compact domain with smooth boundary in the upper half--plane $\mathcal{D} \subseteq \mathbb{C}_+$  and  let  $\beta:\mathcal{D}\to\C$ be a smooth bounded function continuous up to the boundary of $\mathcal{D}$,  that describes the spectral data (soliton density) of the problem. The $\overline{\partial}$-problem is to find a $2\times 2$}  matrix function $\Gamma(z; x,t)$  depending on the complex variable $z\in\C$ and $(x,t)\in\R\times\R^+$ that  satisfies the conditions
\begin{equation}
  \label{eq:d_bar}
  \left\{
  \begin{split}
    \bar{\partial}\Gamma(z;x,t)&=\Gamma(z; x,t)M(z;x,t) \\
    \Gamma(z;x,t) &= \mathbb{I}+\mathcal{O}(z^{-1}),\quad \mbox{as $z\to\infty$}. 
  \end{split} \right.
\end{equation}
Here 
\begin{equation}
\label{M_intro}
  M(z;x,t):=\left\{
  \begin{array}{ll}
    \begin{pmatrix}
      0 & -\beta^{*}(z)^2e^{-2\theta(z;x,t)}\\
      0 & 0
    \end{pmatrix}  \text{ for } z \in\ov \D  \\
    \begin{pmatrix}
      0 & 0\\
      \beta(z)^2 e^{2\theta(z;x,t)}& 0
    \end{pmatrix}  \text{ for } z \in \mathcal{{D}},
  \end{array}\right.
\end{equation}
\begin{equation}
\theta(z;x,t)= i(z^{2}t + z x),
\end{equation}
where  $\beta^{*}(z):=\overline{\beta(\bar{z})}$   and $\ov \D=\{z\in\C\, |\,\overline{z}\in\D\}$.

If the solution $\Gamma(z;x,t)$ of the  $\ov\partial$-problem  \eqref{eq:d_bar}  exists, the solution $\psi$,  of the NLS equation,  is recovered from the solution  $\Gamma(z;x,t)$ by the formula
\begin{equation}
\label{Nsoliton}
\psi(x,t)=2i \lim_{z \to \infty}z(\Gamma(z;x,t))_{12}.
\end{equation}
The $\ov\partial$-problem for non zero boundary condition has  recently  been considered in \cite{Zhu2023}.
The $\ov\partial$-problem \eqref{eq:d_bar} was formulated in \cite{BGO} in the setting of integrable operators (see section below).
However the existence of solutions of such family of $\ov\partial$-problems has remained an open problem.
Furthermore, for general $\D$ and smooth $\beta$ the   class of initial data $\psi(x,0)$  described by the $\ov\partial$-problem \eqref{eq:d_bar}
is in general unknown and requires the asymptotic analysis of $\ov\partial$-problems  {\color{black}  and the   extension of the  $\ov\partial$-steepest descent method  developed by   McLaughlin and  Miller and Dieng
\cite{McLaughlin2008},\cite{Dieng2019}.}
If $\beta $ is analytic in $\D$ and the domain is a ``generalized quadrature domain''  \cite{Gustafsson}, the $\ov\partial$-problem \eqref{eq:d_bar} can be reduced to a classical Riemann-Hilbert problem with discontinuities along a  collection 
$\mathcal{L}$ of  arcs. We call this reduction {\it soliton shielding}   because the effective soliton charge can be reduced from a two-dimensional domain to a collection of arcs.

%
An example of this case, first spotted in \cite{BGO},  is the ellipse where  $\mathcal L$ is the segment joining the foci. 
The goals of this manuscript are:
\begin{itemize}
\item     {\color{black}  to show the existence of the   solution of the  $\ov \partial$-problem  \eqref{M_intro} and to show that such solution is obtained as an infinite soliton limit  (see Theorem~\ref{th:FD} and  formula \eqref{psi_fredholm}); on the way we represent   $|\psi(x,t)|^2$ by the  second $x~$ derivative of the logarithmic of the $\tau$-function of the  $\ov \partial$-problem;}
\item    to show that  when  $\mathcal{D}$ is an ellipse  and $\beta(z)$  is analytic,  the solution of the $\ov \partial$-problem   \eqref{M_intro} produces an   initial datum  that is step-like oscillatory. 
In particular we show  in Theorem~\ref{thmPsi0} that  
\begin{equation}
\label{psiointro}
 \psi(x,0)=\left\{
 \begin{array}{ll}
 & -i \e^{2i(g_{\infty}x+\phi_\infty)}(\alpha_2 +\alpha_1) 
		\dn \left( (\alpha_2 +\alpha_1)  ( x -x_0); m \right) +\mathcal{O}(\e^{c_- x}),\quad \mbox{as $x\to-\infty$},\\
		&\mathcal{O}(\e^{-c_+ x}),\quad \mbox{as $x\to+\infty$}
		\end{array}\right.
\end{equation}
where  $c_\pm$ are positive constants, $\dn(z;m)$ is the Jacobi elliptic function of modulus $m=\frac{4 \alpha_2 \alpha_1}{(\alpha_2+ \alpha_1)^2}$  and $x_0$ 
 is a constant that depends on   $ \beta(z)$ and the geometry of the problem, $g_{\infty}$ and $\phi_{\infty}$ are  real  constants, and $\alpha_1$ and $\alpha_2$ are the  location of the  foci of the ellipse $\mathcal{D}$  on the positive imaginary axis.
\end{itemize}
This manuscript is organized as follows. 
In Section~\ref{sec2} we show how to obtain the $\ov \partial$-problem  \eqref{M_intro} from a $N$-soliton  solution of the  NLS equation in the  limit $N\to\infty$. In Section~\ref{sec3} we  prove solvability of the  $\ov \partial$-problem  \eqref{M_intro} by showing the non-vanishing of its  $\tau$-function.
In Section~\ref{sec4} we   give an asymptotic characterisation of the initial data $\psi(x,0)$ when $\mathcal{D}$  is an ellipse and the function $\beta$  is analytic   in $\mathcal{D}$. In particular we show that such initial data   is step-like oscillatory as described by \eqref{psiointro}. Such an initial datum $\psi(x,0)$,   despite originating from a soliton spectra uniformly covering a two-dimensional domain,  is similar to the initial datum obtained  for the Korteweg de Vries equation and the modified Korteweg de Vries equation in  \cite{Girotti2021,GGJMM} where the soliton spectra fill  uniformly a segment of the imaginary axis of the  spectral plane. 

\section{Infinite soliton limit and $\overline{\partial}$-problem}
\label{sec2}
The $\overline{\partial}$-problem \eqref{eq:d_bar} appears when  considering the  limit of  infinitely many solitons  of  the NLS initial data.
 The one--soliton solution is given by 
\begin{equation}
\label{one_soliton}
\psi(x,t)=2b\,\mbox{sech} [2b(x+2at-x_0)]{\rm e}^{-2i[ax+(a^2-b^2)t+\frac{\phi_0}{2}]}
\end{equation}
where $x_0$ is the initial peak position of the soliton, $\phi_0$ is the initial phase, $2b$ is the modulus of the wave maximal amplitude and $-2a$ is the soliton velocity.  
The $N$ soliton solution can be obtained from the inverse scattering problem for the   Zhakarov-Shabat linear spectral problem with only   discrete spectrum $\{z_j\}_{j=1}^N$
and norming constants $\{c_j\}_{j=1}^N\in\C^N\setminus\{0\}$.
 When the solitons are far apart,  each  spectral point   $z_j=a_j+ib_j$   describes a soliton with  speed $-2a_j\in\R$  and amplitude $2b_j>0$.
  The phase space of the $N$-soliton solution is $\C_+^N\times \{\C^N\setminus\{0\}\}$ where $\C_+$ is the upper half space.

 The inverse scattering problem   recovers   the $N$-soliton solution from the spectral data  $\{z_j,c_j\}_{j=1}^N$.
 This goal is accomplished by formulating the inverse scattering problem  as a Riemann-Hilbert Problem  for a $2\times 2$  matrix $Y^N(z;x,t)$   where $z\in \C$  and $(x,t)\in\R\times\R^+$.  Here we follows the presentation of the $N$-soliton solution implemented in \cite{Girotti2021},\cite{GGJMM}, to obtain the infinite soliton limit.
 
Suppose that the points $z_j=a_j+ib_j$ are contained within a counterclockwise close  curve $\gamma_+$ in the upper half space and similarly the points $\overline{z}_j=a_j-ib_j$ are contained in  $\gamma_-=\{z\in\C \;|\; \overline{z}\in\gamma_+\}$ that is  oriented  counterclockwise.
The matrix $Y^N(z;x,t)$ is required to be  analytic for $z\in\C\backslash\{\gamma_+\cup\gamma_-\}$ and 
 with boundary values satisfying
\begin{equation}
  \label{eq:new_R-H0}
  \begin{split}
&Y^N(z_{+};x,t)=Y^N(z_-;x,t)
  \begin{pmatrix}
    1 &  
    \sum\limits^{N}_{j=1}\frac{\bar{c}_j e^{-2\theta(z,x,t)}}{z-\bar{z}_j}{\bf 1}_{\gamma_-}(z)\\
     - 
     \sum\limits_{j=1}^{N}\frac{c_j e^{2\theta(z,x,t)}}{z-z_j}{\bf 1}_{\gamma_+}(z) & 1
  \end{pmatrix},\quad 
   \end{split}
\end{equation}
for $z\in \gamma_+\cup\gamma_-$ where ${\bf 1}_{\gamma_{\pm}}$ are the characteristic function of the curves $\gamma_{\pm}$. 
The normalization of the problem is fixed by the  behaviour at infinity
\begin{equation}
  \label{eq:new_R-H1}
  \begin{split}
  &Y^N(z;x,t)\underset{z \to \infty}{ \longrightarrow } \1\,.
  \end{split}
\end{equation}
From the inverse scattering problem one finds that 
\begin{equation}\label{Yexpand}
	Y^N(z;x,t) = I + \frac{1}{2i z} 
		\begin{pmatrix} 
		- \int_{x}^\infty |\psi(s,t)|^2 ds &
		\psi(x,t) \\
		\psi(x,t)^* &
		 \int_{x}^\infty |\psi(s,t)|^2 ds 
		\end{pmatrix}
		+ O(z^{-1}).
\end{equation}
From \eqref{eq:new_R-H0}  and \eqref{eq:new_R-H1}  the matrix $Y^N(z;x,t)$  can be reconstructed by the formula

\begin{equation}
\label{solution}
Y^N(z;x,t)= \left[\mathbb{I}+\sum_{j=0}^{N-1}\dfrac{\begin{pmatrix}l_j(x,t)&0\\
m_j(x,t)&0\end{pmatrix}}{z-z_j}+\sum_{j=0}^{N-1}\dfrac{\begin{pmatrix}0&-\overline{m_j(x,t)}\\
0&\overline{l_j(x,t)}\end{pmatrix}}{z-\overline{z_j}}\right]J_Y(z)\,,
\end{equation}
where 
\[
J_Y(z)=  \begin{pmatrix}
    1 & 
    
    \sum\limits^{N}_{j=1}\frac{\bar{c}_j e^{-2\theta(z,x,t)}}{z-\bar{z}_j}{\bf 1}_{D_{\gamma_-}}(z)\\
     -
     
     \sum\limits_{j=1}^{N}\frac{c_j e^{2\theta(z,x,t)}}{z-z_j}{\bf 1}_{D_{\gamma_+}}(z) & 1
  \end{pmatrix},\quad 
\]
and $D_{\gamma_\pm}$ are the domains bounded by $\gamma_\pm$, respectively. The  request that $Y^N(z;x,t)$ is analytic  for $z\in\C\backslash\{\gamma_+\cup\gamma_-\}$  leads to a determined  linear system of equations for  the coefficients $l_j$ and $m_j$. 
Then, the reconstruction formula of the  $N$-soliton solution  $\psi_N(x,t)$ is obtained  from \eqref{Yexpand} by 
\[
\psi_N(x,t)=2i \lim_{z \to \infty}z(Y^N(z;x,t))_{12}=-2i\sum_{j=1}^N\overline{m_j(x,t)}.
\]
A direct expression of the amplitude of the solution can be obtained  from \eqref{Yexpand} using the Kay-Moses formula \cite{Kay1955},
\begin{equation}
  \label{eq:K-M}
  |\psi_{N}(x,t)|^2 = 2i\partial_x\sum_{j=1}^Nl_j(x,t)=\pa^2_{x}\log({\rm det}(\1_{\C^N} + \Phi_{N}(x,t)\,\ov{\Phi_{N}(x,t)})), 
\end{equation}
 where $\Phi_N$ is an $N \times N$ matrix with elements
\begin{equation}
\label{Phi}
  (\Phi_{N}(x,t))_{jk}:= \frac{\sqrt{c_j\ov{c_k}}e^{\theta(z_j,x,t)-\theta(\ov{z_k},x,t)}}{i(z_j-\ov{z_k})}, \quad \Phi_{N}(x,t)^{T}=\ov{\Phi_{N}(x,t)}. 
\end{equation}
We observe that $\Phi_N$  is a Hermitian matrix. The  determinant expression~\eqref{eq:K-M} was derived in~\cite{Borghese2018}.

We  consider the limit  $N \to \infty$ as  in \cite{BGO}.  Let  $\mathcal{D}$  be the simply connected domain bounded by $\gamma_+$ (and $\ov \D $ bounded by  $\gamma_-$).  
We assume that the poles $z_j$  accumulate uniformly in $\mathcal{D}$ as $N$ grows.  We  choose the norming constants $c_j$ to be interpolated by a smooth function $\beta^2 (z, \ov z)$ in $\mathcal{D}$ (we write for brevity $\beta^2(z)$ where it is understood that the function is smooth and not necessarily analytic)
\begin{equation}
  \label{eq:norm-const}
c_j=\frac{\mathcal{A}}{\pi N}\beta^2(z_j){\bf 1}_{\mathcal{D}}(z_j),
\end{equation}
where ${\bf 1}_{\mathcal{D}}$ is the characteristic function of the domain $\mathcal{D}$  and  $\mathcal{A}$ is the area  of $\mathcal{D}$.
Upon taking the limit $N \to +\infty$, we get
\begin{equation}
  \sum_{j=1}^{N}\frac{c_j}{(z-z_j)}=\sum_{j=1}^{N}\frac{\mathcal{A}}{\pi N}\frac{\beta^2(z_j)}{(z-z_j)} \xrightarrow[N\to\infty]{} \iint_{\mathcal{D}}\frac{\beta^2(w)}{z-w}\frac{\d^2w}{\pi}
\end{equation}
where $\d^2w = \frac{\d\bar{w} \wedge \d w}{2 i}=\d x\d y$ is the usual area element. 

Consequently, the RHP~\eqref{eq:new_R-H0},\eqref{eq:new_R-H1},  for $(x,t)$ in compact sets  in the limit $N\to\infty$,  is the following.
\begin{problem}
  We are looking for a $2\times 2$ matrix  $Y^\infty(z;x,t)$  analytic in $\C\backslash \{\gamma_+\cup\gamma_-\}$ such that 
\begin{equation}
  \label{eq:new_R-H_2}
  \begin{split}
    Y^\infty_{+}(z;x,t)&=Y^\infty_{-}(z;x,t)J_\infty(z;x,t),\\
    Y^\infty_{+}(z;x,t)&= \1 + \mathcal{O}\left(z^{-1}\right), \text{ as } z \to \infty
    \end{split}
\end{equation}
with jump matrix
\begin{equation}
\label{Jinfty}
  J_\infty(z;x,t)=
   \begin{pmatrix}
    1 & \iint\limits_{\overline{\mathcal{D}}}\frac{e^{-2\theta(z,x,t)} \beta^{*}(w)^2\d^2w}{\pi(z-w)}{\bf 1}_{\gamma_{-}}(z)\\
   - \iint\limits_{\mathcal{D}}\frac{e^{2\theta(z,x,t)} \beta(w)^2\d^2w}{\pi(z-w)}{\bf 1}_{\gamma_{+}}(z)& 1
  \end{pmatrix},
\end{equation}
where we have introduced the notation $\beta^*(z) := \ov {\beta(\ov z)}$.
\end{problem}
We can relate this RHP to a $\ov\pa$-problem as we now explain. 
Consider  the transformation
\begin{equation}
  \label{eq:trans_2}
 \Gamma(z;x,t)=Y^\infty(z;x,t)A(z;x,t)
\end{equation}
where the matrix $A(z;x,t)$ is defined by 
\begin{equation*}
  A(z;x,t)=\left\{
    \begin{array}{ll}
     & \begin{pmatrix}
       1 & -\iint\limits_{\ov \D }\frac{e^{-2\theta(z,x,t)} \beta^{*}(w)^2\d^2w}{\pi(z-w)}\\
       0 & 1
     \end{pmatrix} \text{ for } z \text{ inside the loop } \gamma_{-}\\
    & \begin{pmatrix}
       1 & 0\\
        \iint\limits_{\mathcal{D}}\frac{e^{2\theta(z,x,t)} \beta(w)^2\d^2w}{\pi(z-w)} & 1
      \end{pmatrix} \text{ for } z \text{ inside the loop } \gamma_{+}\\
    & \1 \text{ otherwise }  
    \end{array} \right.
\end{equation*}
We apply the $\ov{\partial}$ operator to~\eqref{eq:trans_2}
\begin{equation*}
  \ov{\partial}\Gamma(z;x,t)=Y^\infty(z;x,t)\ov{\partial}A(z;x,t)=\Gamma(z;x,t)M(z;x,t)
\end{equation*}
where \begin{equation}
\label{M}
  M(z;x,t)= A^{-1}(z;x,t)\ov{\partial}A(z;x,t)\,.
\end{equation}
A simple computation shows that 
 \begin{equation*}
 \begin{split}
  M(z;x,t)&= \begin{pmatrix}
    0 & -\beta^*(z)^2e^{-2\theta(z,x,t)}{\bf 1}_{\overline{\mathcal{D}}}(z)\\
    \beta(z)^2e^{2\theta(z,x,t)}{\bf 1}_{\mathcal{{D}}}(z) & 0
  \end{pmatrix},
\end{split}
\end{equation*}
so that  $\Gamma(z;x,t) $ defined in 
 \eqref{eq:trans_2} satisfies the $\overline{\partial}$-problem \eqref{eq:d_bar}.
 The NLS solution is still recovered by the formula
 \[
\psi(x,t)=2i \lim_{z \to \infty}z(\Gamma(z;x,t))_{12}.
\]
\begin{remark}
It would be interesting to consider a similar limit for higher order solitons  or even breathers  in the spirit of \cite{Bilman1}\cite{Bilman3}\cite{Bilman2019}.
\end{remark}

 \section{Fredholm determinant for the soliton gas}
 \label{sec3}
In~\cite{dbar2024},   it was shown that  the  $\overline{\partial}$-problems of the form \eqref{M} extend the theory of integrable operators of the   Its-Izergin-Korepin-Slavnov (IIKS) framework   \cite{IIKS}. The gist of \cite{dbar2024} is to consider  a Hilbert-Schmidt operator of  integrable type acting on $L^2(\mathscr{D},\d^2 w)\otimes \mathbb{C}^n$  to itself:
 \begin{equation}
   \label{eq:K-op}
   \mathcal{K}[v](z,\overline{z})=\iint_{\mathscr{D}}\hat{\mathcal{K}}(z,\overline{z},w,\overline{w})v(w)\d^2w , \; \; z \in \mathscr{D},
 \end{equation}
 with kernel of the form
 \begin{align}
 \label{integrableK}
   & \hat{\mathcal{K}}(z,\overline{z},w,\overline{w})= \frac{p(z,\overline{z})^{T}\,q(w,\overline{w})}{(z-w)}, & \\
   & p(z,\overline{z})^{T}\,q(z,\overline{z})\equiv 0 \equiv (\overline{\partial}p(z,\overline{z}))^{T}\,q(z,\overline{z}) & \qquad p,q \in \mathcal{C}^{\infty}(\mathscr{D}, {\rm Mat} (r \times n,\mathbb{C})).
 \end{align}
 Here $\mathcal D$ is just an arbitrary compact set  in the complex plane (consisting possibly of several connected components): as mentioned earlier, we will drop explicit notation of the $\ov z$ dependence when indicating a smooth function. 
The result of \cite{dbar2024} is that the resolvent operator  $\mathcal{R}$ of $\mathcal{K}$ exists if and only if the  following $\overline{\partial}$-problem  for the matrix  $\Gamma(z)\in GL_r(\mathbb C)$ admits a solution: 
\begin{equation}
   \label{eq:d-bar_IIKS}
   \overline{\partial}\Gamma(z)= \pi \Gamma(z) p(z)\,q(z)^{T}{\bf 1}_{\mathcal{D}}(z), \qquad \Gamma(z) \underset{z \to \infty}{ \to } \1.
 \end{equation}
 The resolvent operator  $\mathcal{R}$  has  kernel  $ \hat{\mathcal{R}}(z,w)$ expressed via $\Gamma$, $p$ and $q$  by the relation
 \begin{equation}
   \hat{\mathcal{R}}(z,w)= \frac{p(z)^{T}\Gamma^T(z)\Gamma^{-T}(w)q(w)}{z-w}\,.
 \end{equation}
In  \cite{dbar2024} we  also  showed  
that the  $\tau$-function of the $\ov \pa$-problem identified by 
the Hilbert--Carleman determinant  (see e.g. \cite{Gohberg_2000})
\begin{equation}
\label{tau}
\tau :=\text{det}_2\big[{\rm Id} - \mathcal K\big]= \text{{\rm det}}\left[(\id - \mathcal{K})\e^{\mathcal{K}}\right],
\end{equation}
defined  on the space $L^2(\mathscr{D}, \d^2 z)\otimes \mathbb{C}^n$.  The Hilbert--Carleman determinant is  well defined for Hilbert-Schmidt operators.  We then showed the following
\begin{theorem}\cite{dbar2024}
\label{Theorem3.1}
The operator $\id-\mathcal K$,  with $\mathcal K$ as in \eqref{eq:K-op} and kernel $ \hat{\mathcal{K}}(z,w)$ of the form \eqref{integrableK} is invertible in $L^2(\mathscr D, \d^2 z) \otimes \C^n$ if and only if the $\ov\pa$-problem \ref{eq:d-bar_IIKS} admits a solution.
\end{theorem}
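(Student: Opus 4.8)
\medskip
\noindent\textbf{Proof strategy.} My plan is to transcribe the Its--Izergin--Korepin--Slavnov dressing argument into the two--dimensional $\ov\partial$ setting, with the Cauchy projections $\mathcal C_\pm$ (whose difference is $\mathrm{Id}$ on a contour) replaced by the solid Cauchy transform $C[f](z):=\iint_{\mathscr D}\frac{f(w)}{z-w}\,\d^2w$. I will use two facts about $C$: first, $\bar\partial C[f]=\pi f$ in the distributional sense; second, if $H$ is locally H\"older continuous, has $\bar\partial H$ supported in the compact set $\ov{\mathscr D}$, and tends to $c$ at infinity, then $H\equiv c+\tfrac1\pi C[\bar\partial H]$ on $\C$ (Weyl's lemma and Liouville). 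I would begin by fixing the functional--analytic frame: on the bounded set $\mathscr D$ one has $\frac1{z-w}\in L^p$ for every $p<2$, so $C$ is compact from $L^2(\mathscr D)$ into each $L^q(\mathscr D)$, $q<\infty$, and maps $L^p(\mathscr D)$, $p>2$, into $C^{0,\alpha}(\ov{\mathscr D})$; combined with $p,q\in L^\infty(\mathscr D)$ and the hypotheses $p^Tq\equiv0$, $(\bar\partial p)^Tq\equiv0$ (which make the kernel \eqref{integrableK} bounded and continuous up to the diagonal, where its numerator vanishes) this yields that $\mathcal K$ is Hilbert--Schmidt on $L^2(\mathscr D,\d^2z)\otimes\C^n$.

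\smallskip
\noindent For the implication ``\eqref{eq:d-bar_IIKS} solvable $\Rightarrow$ $\mathrm{Id}-\mathcal K$ invertible'' I would take a solution $\Gamma$, note first that $\bar\partial\det\Gamma=\pi\det\Gamma\,\mathrm{tr}(pq^T)=0$ and $\det\Gamma\to1$, so $\det\Gamma\equiv1$ and $\Gamma,\Gamma^{-1}$ are analytic off $\ov{\mathscr D}$ and continuous, and then dress the densities as $\tilde p:=\Gamma p$, $\tilde q:=\Gamma^{-T}q$ on $\mathscr D$. Since $q^Tp=(p^Tq)^T\equiv0$ one gets $\bar\partial\tilde p=\Gamma\,\bar\partial p$, whence $\tilde p^T\tilde q=p^Tq\equiv0$ and $(\bar\partial\tilde p)^T\tilde q=(\bar\partial p)^Tq\equiv0$, so the operator $\mathcal R$ with kernel $\frac{\tilde p(z)^T\tilde q(w)}{z-w}$ is again Hilbert--Schmidt and is the candidate resolvent. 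The heart of the matter is to prove $\mathcal K\mathcal R=\mathcal R\mathcal K=\mathcal R-\mathcal K$: I would expand the kernel of $\mathcal K\mathcal R$ with the partial--fraction identity $\frac1{(z-u)(u-w)}=\frac1{z-w}\big(\frac1{z-u}+\frac1{u-w}\big)$, substitute $q(u)\tilde p(u)^T=\tfrac1\pi\bar\partial\Gamma(u)^T$ (the transpose of the $\ov\partial$--equation) together with the representation $\Gamma^T=\1+\tfrac1\pi C[\bar\partial\Gamma^T]$, which collapses the two $u$--integrals to $\Gamma(z)^T-\1$ and $\1-\Gamma(w)^T$, and then use $p(z)^T\Gamma(z)^T=\tilde p(z)^T$ and $\Gamma(w)^T\tilde q(w)=q(w)$ to land on $\frac{\tilde p(z)^T\tilde q(w)-p(z)^Tq(w)}{z-w}$; the computation for $\mathcal R\mathcal K$ is the mirror image with $\bar\partial\Gamma^{-T}=-\pi\Gamma^{-T}qp^T$ in place of the $\ov\partial$--equation. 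This gives $(\mathrm{Id}-\mathcal K)(\mathrm{Id}+\mathcal R)=(\mathrm{Id}+\mathcal R)(\mathrm{Id}-\mathcal K)=\mathrm{Id}$, so $\mathrm{Id}-\mathcal K$ is invertible, and along the way identifies the resolvent kernel as $\frac{p(z)^T\Gamma^T(z)\Gamma^{-T}(w)q(w)}{z-w}$.

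\smallskip
\noindent For the converse I would assume $\mathrm{Id}-\mathcal K$ invertible, observe that the columns of $p^T$ lie in $L^2(\mathscr D)\otimes\C^n$ since $\mathscr D$ is bounded, set $\tilde p^{\,T}:=(\mathrm{Id}-\mathcal K)^{-1}p^T$, and upgrade $\tilde p$ to $L^\infty(\mathscr D)$ by bootstrapping $\tilde p^T=p^T+M_{p^T}C\,M_q[\tilde p^T]$ through the mapping properties of $C$. Then I would define $\Gamma(z):=\1+\iint_{\mathscr D}\frac{\tilde p(w)q(w)^T}{z-w}\,\d^2w$, which is analytic off $\ov{\mathscr D}$, tends to $\1$, and obeys $\bar\partial\Gamma(z)=\pi\,\tilde p(z)q(z)^T{\bf 1}_{\mathscr D}(z)$; transposing $(\mathrm{Id}-\mathcal K)\tilde p^{\,T}=p^T$ shows $\tilde p(z)=\Gamma(z)p(z)$ for $z\in\mathscr D$, so $\bar\partial\Gamma=\pi\,\Gamma pq^T{\bf 1}_{\mathscr D}$; as above $\det\Gamma\equiv1$, and hence $\Gamma\in GL_r(\C)$ solves \eqref{eq:d-bar_IIKS}.

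\smallskip
\noindent The algebra above is a faithful copy of the contour IIKS computation, so I expect the main obstacle to be analytic rather than algebraic: making precise the spaces in which each object lives so that $(\mathrm{Id}-\mathcal K)^{-1}$ can legitimately be applied to $p^T$ and so that the dressed densities in the resolvent are at least $L^\infty$ (the bootstrap using $C\colon L^2\to L^q$ and $C\colon L^{>2}\to C^{0,\alpha}$); justifying the Fubini interchanges, all of whose integrands carry only the integrable singularity $|z-w|^{-1}$ over a bounded set; and, above all, handling the solid Cauchy transform as the correct $\bar\partial$--inverse in the precise form that lets the $u$--integrals collapse --- the genuinely new ingredient compared with the one--dimensional theory. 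I would also be careful to invoke $p^Tq\equiv0$ (for $\det\Gamma\equiv1$ and the diagonal vanishing of $\tilde p^T\tilde q$) and $(\bar\partial p)^Tq\equiv0$ (for continuity of the kernels) at exactly the right places, and to verify, via $\bar\partial\tilde p=\Gamma\,\bar\partial p$, that both conditions survive the dressing.
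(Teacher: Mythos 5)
This theorem is imported verbatim from \cite{dbar2024} and the present paper gives no proof of it, so there is nothing internal to compare against; your proposal is, however, correct and is precisely the argument of the cited reference: the IIKS dressing/undressing computation with the contour Cauchy projectors replaced by the solid Cauchy transform $C$ satisfying $\bar\partial C[f]=\pi f$, the dressed densities $\tilde p=\Gamma p$, $\tilde q=\Gamma^{-T}q$, and the Liouville argument for $\det\Gamma\equiv 1$. The only point worth making explicit is that the representation $\Gamma^{T}=\1+\tfrac1\pi C[\bar\partial\Gamma^{T}]$ (and hence the collapse of the $u$--integrals) requires the solution class of \eqref{eq:d-bar_IIKS} to consist of bounded continuous matrices, which is implicit in the statement and which your bootstrap via $C\colon L^{p>2}\to C^{0,\alpha}$ correctly supplies in the converse direction.
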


The goal of this section is to identify the integrable operator $\mathcal{K}$ for the $\ov\partial$-problem \eqref{M} and show that the corresponding $\tau$-function is  positive, thus proving the solvability of the $\ov\partial$-problem \eqref{M}.

 In our case, we can factorize the matrix $M(z,\overline{z})$, defined in~\eqref{M}, in the matrix product of $2 \times 1$ vectors $p(z,\overline{z})$ and $q(z,\overline{z})$
 \begin{equation}
   p(z)= \frac{e^{-\theta(z,x,t)\sigma_3}}{\sqrt{\pi}}
   \begin{bmatrix}
     -{\beta^{*}(z){\bf 1}_{_{\ov \D}}(z)}\\
     {\beta(z){\bf 1}_{_ {\D}}(z)}
   \end{bmatrix}, \qquad 
   q(z,\overline{z})= \frac{{\rm e}^{\theta(z,x,t)\sigma_3}}{\sqrt{\pi}}
   \begin{bmatrix}
      \beta(z){\bf 1}_{_ {\D}}(z)\\
     \beta^{*}(z){\bf 1}_{_ {\ov \D}}(z)
   \end{bmatrix}.
 \end{equation}
From \eqref{integrableK}, the corresponding  operator $\mathcal{K}$   acting on  $L^{2}(\mathcal{D}\cup \ov \D)$  has  the following kernel:
 \begin{equation}
   \label{eq:Kern-NLS}
   \widehat{\mathcal{K}}(z,w)=\frac{{\beta(z)\beta^*(w)}{\rm e}^{\theta(z,x,t) - \theta(w,x,t)}{\bf 1}_{_{\mathcal{D}}}(z){\bf 1}_{_{\ov \D}}(w)}{\pi(z-w)} 
  -\frac{{\beta^{*}(z)\beta(w)}{\rm e}^{\theta(w,x,t) - \theta(z,x,t)}{\bf 1}_{_{\mathcal{D}}}(w)
  {\bf 1}_{_{\ov \D}}(z)}{\pi(z - w)}.
 \end{equation}
 
 The goal  is to show that 
$\text{det}_2\big[{\rm Id}_{L^2(\D\cup \ov \D)} - \mathcal K\big]$ is non vanishing for all $x,t\in \R$  thus proving that the $\ov\partial$-problem \eqref{eq:d_bar} has always a solution. 
 

 In order to proceed with our analysis, we  first revisit the theory of $N$-soliton solutions.
%
%

\paragraph{Review of the $N$-soliton solution.}

The $N$-soliton solution of the NLS equation~\eqref{NLS}  with scattering data $\{z_j,c_j\}_{j=1}^N$, $\im( z_j)>0$,  $c_j\in\C\backslash\{ 0 \}$  can be recovered by the Kay-Moses formula \eqref{eq:K-M}
 where  $\Phi$ is an $N \times N$ matrix with elements
\begin{equation*}
  (\Phi_{N}(x,t))_{jk}:= \frac{\sqrt{c_j\ov{c_k}}e^{\theta(z_j)-\theta(\ov{z_k})}}{i(z_j-\ov{z_k})}, \quad \Phi_{N}(x,t)^{T}=\ov{\Phi_{N}(x,t)}. 
\end{equation*}
We observe that $\Phi_N$  is a Hermitian matrix. 
The determinant appearing in the Kay-Moses formula  is precisely the finite--dimensional counterpart of the $\tau$-function and thus we set
\begin{equation}
  \label{eq:tau-N}
  \tau_{N}(x,t)= {\rm det}(\1_{\C^N} + \Phi_{N}(x,t)\,\ov{\Phi_{N}(x,t)}).
\end{equation}
{\color{black} This expression can be derived from the IIKS theory starting from the RHP \eqref{eq:new_R-H0} as we show in Appendix \ref{IIKStau} for the benefit of the reader.}
We can represent the matrix $\Phi_{N}$  as a composition of an operator with its adjoint: define $\mathcal{M}_{N}: L^2([x,+\infty)) \to \mathbb{C}^{N}$ and its adjoint $\mathcal{M}_{N}^{\dag}:\mathbb{C}^{N} \to  L^2([x,+\infty)) $ as
\begin{align} 
\label{defMMd}
  & \mathcal{M}_{N}[u]_{j}:=\int^{+\infty}_{x}\sqrt{c_j}e^{i(z_j s + z_j^2 t)} u(s)\d s , \\
  &\mathcal{M}^{\dag}_{N}[\vec{v}](y) := \sum_{k=1}^{N}\sqrt{\overline{c_k}}e^{-i (\ov{z_k} y +\ov{z_k}^2 t)}v_{k}, \ \ \ y\in [x,\infty).
\end{align}
A direct computation shows that  $\Phi_N=- \mathcal{M}_{N}\circ \mathcal{M}^{\dag}_{N}$. 
Therefore $\Phi_N$ is not only a Hermitian matrix, but a non-positive  definite one. Similarly the complex conjugate matrix $\ov{\Phi_N}$ (no transposition), is also a Hermitian, non-positive definite matrix and hence the determinant \eqref{eq:tau-N} is necessarily strictly positive.
%
We now consider the limit $N\to \infty$ and show a similar theorem of positivity for our Hilbert-Carleman determinant \cite{Gohberg_2000}.
\begin{theorem}
  \label{th:FD}
Let $\{z_j,c_j\}_{j=1}^{N}$ be the spectral data of the $N$ soliton solution.
Let us assume that $c_j$ are interpolated by a smooth function as in \eqref{eq:norm-const} and 
 the point spectrum $\{z_j\}_{j=1}^N$  accumulates uniformly,  as $N\to\infty$, on a domain $\mathcal{D}$.
 Then the function $\tau_{N}(x,t)$ defined in~\eqref{eq:tau-N} converges,  for $(x,t)$ in a compact set,  to  the $\tau$-function of the $\ov\partial$-problem \eqref{eq:d_bar}, namely  the  Hilbert-Carleman determinant
  \begin{equation}
    \label{eq:tau-inf}
    \tau(x,t)= {\rm det}_2(\id_{L^{2}(\mathcal{D}\cup \ov \D)} - \mathcal{K}),
  \end{equation}
  where $\mathcal{K}$ is a trace class integrable operator acting on  $L^2(\mathcal{D}\cup\ov \D)$ with kernel $\widehat{\mathcal{K}}$ given by
  \begin{align}
   & \label{eq:Kth} 
   \widehat{\mathcal{K}}(z,w)=\frac{{\beta(z)\beta^*(w)}{\rm e}^{\theta(z,x,t) - \theta(w,x,t)}{\bf 1}_{_{\mathcal{D}}}(z){\bf 1}_{_{\ov \D}}(w)}{\pi(z-w)} 
  -\frac{{\beta^{*}(z)\beta(w)}{\rm e}^{\theta(w,x,t) - \theta(z,x,t)}{\bf 1}_{_{\mathcal{D}}}(w)
  {\bf 1}_{_{\ov \D}}(z)}{\pi(z - w)}.
  \end{align}
 Moreover $\tau(x,t)>  0$ for all $x, t\in\R$ and therefore the solution of the $\ov\partial$-problem \eqref{eq:d_bar} exists.
\end{theorem}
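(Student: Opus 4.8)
The plan is to transfer the elementary positivity of the finite $N$-soliton $\tau$-function recorded above — $\tau_N=\det(\1_{\C^N}+\Phi_N\ov{\Phi_N})\ge1$ — to the Hilbert--Carleman determinant, by first rewriting $\tau_N$ as an honest Fredholm determinant on an $N$-independent Hilbert space, then passing to the limit there, and finally matching the result with $\det_2(\id_{L^2(\D\cup\ov\D)}-\mathcal K)$. A preliminary simplification: since $\D\subset\C_+$ is compact, $c:=\mathrm{dist}(\D,\R)>0$, so $\D$ and $\ov\D$ are disjoint and $\im(z-w)\ge 2c>0$ on the support of each of the two blocks of $\widehat{\mathcal K}$ in \eqref{eq:Kth}; thus the kernel has no $1/(z-w)$ singularity and, $\beta$ being smooth on $\ov\D$, it is $C^\infty$ and bounded there. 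Inserting $\tfrac1{z-w}=-i\int_0^\infty\e^{i\xi(z-w)}\,\d\xi$ (legitimate since $\im(z-w)\ge2c$ on that region) writes each block of $\mathcal K$ as a trace-norm convergent integral of rank-one operators with norms $O(\e^{-c\xi})$, uniformly for $(x,t)$ in a compact set, so $\mathcal K$ is trace class; moreover $\widehat{\mathcal K}(z,z)\equiv0$, as the two blocks are strictly off-diagonal in $L^2(\D\cup\ov\D)=L^2(\D)\oplus L^2(\ov\D)$, hence $\mathrm{tr}\,\mathcal K=0$ and $\det_2(\id-\mathcal K)=\det(\id-\mathcal K)$. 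We may therefore use the ordinary Fredholm determinant throughout.

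Next I would reduce $\tau_N$ to a fixed space. By the text, $\Phi_N=-\mathcal M_N\circ\mathcal M_N^{\dagger}$ with $\mathcal M_N$ as in \eqref{defMMd}, hence also $\ov{\Phi_N}=-\ov{\mathcal M_N}\,(\ov{\mathcal M_N})^{\dagger}$. After the change of variables $s=\sigma+x$, which moves everything onto the fixed Hilbert space $\mathcal H:=L^2([0,\infty))$, Sylvester's identity $\det(\1+AB)=\det(\1+BA)$ applied with $A=\mathcal M_N$ and $B=\mathcal M_N^{\dagger}\ov{\mathcal M_N}(\ov{\mathcal M_N})^{\dagger}$ gives
\[
\tau_N(x,t)=\det\!\big(\1_{\mathcal H}+T_N^{\dagger}T_N\big),\qquad T_N:=(\ov{\mathcal M_N})^{\dagger}\mathcal M_N,
\]
which makes $\tau_N\ge1$ manifest and — the crucial point — realizes $T_N$ as the finite-rank integral operator on the $N$-independent space $\mathcal H$ with kernel
\[
T_N(\sigma,\sigma')=\sum_{k=1}^{N}c_k\,\e^{2\theta(z_k;x,t)}\,\e^{i z_k(\sigma+\sigma')}=\sum_{k=1}^{N}c_k\,h_{z_k}(\sigma)\,h_{z_k}(\sigma'),\qquad h_w(\sigma):=\e^{\theta(w;x,t)+i w\sigma}.
\]

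Now I would pass to the limit. Inserting $c_k=\tfrac{\mathcal A}{\pi N}\beta^2(z_k)$ from \eqref{eq:norm-const}, the kernel $T_N(\sigma,\sigma')=\tfrac{\mathcal A}{N}\sum_{k}\tfrac{\beta^2(z_k)}{\pi}\,h_{z_k}(\sigma)h_{z_k}(\sigma')$ is exactly a Riemann sum for $T_\infty(\sigma,\sigma'):=\tfrac1\pi\iint_{\D}\beta^2(w)\,h_w(\sigma)h_w(\sigma')\,\d^2w$. The integrand $w\mapsto\tfrac{\beta^2(w)}{\pi}\,h_w(\cdot)h_w(\cdot)$ is continuous from $\D$ into the trace-class ideal, because $\|h_w\|_{\mathcal H}^2=|\e^{\theta(w;x,t)}|^2/(2\im w)$ is bounded uniformly for $w\in\D$ and $(x,t)$ in a compact set (as $\im w\ge c>0$); hence the uniform accumulation of the $z_k$ in $\D$ yields $\|T_N-T_\infty\|_1\to0$, uniformly for $(x,t)$ in compacts, so $\|T_N^{\dagger}T_N-T_\infty^{\dagger}T_\infty\|_1\to0$, and by continuity of the Fredholm determinant in the trace norm,
\[
\tau_N(x,t)\ \longrightarrow\ \tau(x,t):=\det\!\big(\1_{\mathcal H}+T_\infty^{\dagger}T_\infty\big)\ \ge\ 1\ >\ 0.
\]
Finally I would identify $\tau$ with $\det_2(\id-\mathcal K)$: the operator $T_\infty=(\ov{\mathcal M_\infty})^{\dagger}\mathcal M_\infty$ is built from the natural continuum limit $\mathcal M_\infty$ of the $\mathcal M_N$, and running on $\mathcal M_\infty$ the same Sylvester/Schur rearrangements that Appendix~\ref{IIKStau} performs for finite $N$ (following the IIKS theory of \cite{dbar2024}) gives $\tau=\det(\1+\Phi_\infty\ov{\Phi_\infty})=\det_2(\id_{L^2(\D\cup\ov\D)}-\mathcal K)$, with $\Phi_\infty:=-\mathcal M_\infty\mathcal M_\infty^{\dagger}$. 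Since $\tau>0$, the operator $\id_{L^2(\D\cup\ov\D)}-\mathcal K$ is invertible, so by Theorem~\ref{Theorem3.1} the $\bar\partial$-problem \eqref{eq:d_bar} admits a solution $\Gamma(z;x,t)$.

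The crux will be the convergence $T_N\to T_\infty$ in the trace norm, uniformly for $(x,t)$ in compact sets: this is a Riemann-sum statement for a trace-class-valued continuous integrand, and making it rigorous needs genuine mesh/equidistribution control on the $z_k$, not merely weak-$*$ convergence of their empirical measures. It leans decisively on $\mathrm{dist}(\D,\R)>0$ — precisely what both neutralises the would-be $1/(z-w)$ singularity in $\widehat{\mathcal K}$ and forces the exponential decay of $h_w(\sigma)$ in $\sigma$. The remaining work is careful bookkeeping with the conjugation conventions ($\beta^{*}(z)=\ov{\beta(\bar z)}$, $\ov{\theta(z;x,t)}=-\theta(\bar z;x,t)$) in the Sylvester/Schur rearrangements so that the limit lands on exactly the operator $\mathcal K$ with kernel \eqref{eq:Kth}; and the restriction to $(x,t)$ in compact sets is essential, since $|\e^{\theta(w;x,t)}|$ grows as $|x|,|t|\to\infty$.
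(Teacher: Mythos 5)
Your proposal is correct and follows essentially the same route as the paper: your $T_N=(\ov{\mathcal M_N})^{\dagger}\mathcal M_N$ is exactly the paper's Hankel operator $\mathcal B_N$, positivity comes from $\det(\1+T^{\dagger}T)\ge 1$, the limit is taken in trace norm, and the identification with $\det_2(\id-\mathcal K)$ proceeds by the same cyclic rearrangement (which the paper carries out explicitly via the factorization $\mathcal B=\mathcal L\circ\mathcal F$ and a $2\times 2$ block-determinant identity, whereas you only sketch that bookkeeping). The one slip is cosmetic: on the second block of $\widehat{\mathcal K}$ one has $\im(z-w)\le -2c$, so the Fourier--Laplace representation of $1/(z-w)$ needs the opposite sign there.
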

\begin{remark}
The operator   $\mathcal{K}$ is a trace class operator (see \eqref{323} and the following discussion)  and $\mbox{Tr}\,\mathcal{K}=\int_{\mathcal{D}\cup\ov \D} \widehat{\mathcal{K}}(z,z)d^2z=0$.  We conclude that the  Hilbert-Carleman determinant \eqref{eq:tau-inf} coincides with the standard Fredholm determinant. More pragmatically, the series defining the two determinants coincide because $\mathcal K$ is identically zero on the diagonal.
\end{remark}

\noindent
{\it Proof of Theorem~\ref{th:FD}}.
Consider the operator $\mathcal{M}_{N}$ defined in \eqref{defMMd}.
It is convenient to introduce the two new operators $\mathcal{B}_{N}:=(\ov{\mathcal{M}}_{N}^{\dag}\circ \mathcal{M}_N)$ and $\ov{\mathcal{B}}_{N}:=( \mathcal{M}_{N}^{\dag} \circ \ov{\mathcal{M}}_{N} )$ on $L^2([x,+\infty))$ to itself which then  have the explicit  form
\bea
  \label{eq:op_N}
 \mathcal{B}_{N}[u](y) =
  \int^{+\infty}_{x}\hat{\mathcal{B}}_{N}(y + s)u(s)\d s, \qquad \hat{\mathcal{B}}_{N}(s)= \sum_{k=1}^{N}c_{k}{\rm e}^{i z_k s + 2iz_k^2 t}.
\eea
Then the tau function  \eqref{eq:tau-N} reads 
\be
\tau_N= \text{det}\bigg[\1_{\C^{N}} + \Phi_N \ov{\Phi_N} \bigg] = \text{det}\bigg[\1_{\C^{N}} + \mathcal M_{_{N}} \mathcal M_{_N}^\dagger \ov{\mathcal M}_{_{N}} \ov{\mathcal M_{_N}^\dagger} \bigg] 
=\nn\\
= \text{det}\bigg[\id_{L^{2}([x,+\infty))} +\ov{\mathcal M_{_N}^\dagger}  \mathcal M_{_{N}} \mathcal M_{_N}^\dagger \ov{\mathcal M}_{_{N}} \bigg] =\text{det}\bigg[\id_{L^2([x,+\infty))} + \mathcal{B}_{N} \ov{\mathcal{B}_{N}}
\bigg]\,.
\ee
We now rescale the constants in the same way  as in \eqref{eq:norm-const}  and send $N \to +\infty$. 
  For $(x, t)$ in a compact set,  the kernels $\hat{\mathcal{B}}_{N}$ and $\hat{\ov{\mathcal{B}}}_N$, defined in~\eqref{eq:op_N}, converge uniformly as  $N \to +\infty$ to the kernels $\hat{\mathcal{B}}$ and $\hat{\ov{\mathcal{B}}}$ defined as
  \begin{equation}
    \label{eq:op_inf}
   \hat{\mathcal{B}}(s):=
    \iint_{\mathcal{D}}
    \beta(w)^2e^{i(ws + 2w^2t)}\frac{\d^2 w}{\pi}, \qquad
     \hat{\ov{\mathcal{B}}}(s):= \iint_{\mathcal{D}}\ov{\beta(w)}^2{\rm e}^{-i(\ov{w}s + 2\ov{w}^2t)}\frac{\d^2 w}{\pi}.
  \end{equation}
 The corresponding convolution operators $\mathcal B:L^2([x,+\infty)) \to L^2([x,+\infty))$ and $\ov{\mathcal B}:L^2([x,+\infty)) \to L^2([x,+\infty))$
 \be
 \label{Hankel}
 \mathcal B[u](y) := \int_{x}^\infty \widehat{ \mathcal B}(y+s) u(s)\d s,  \ \ \ 
  \ov{\mathcal B[u]}(y) := \int_{x}^\infty \widehat{ \ov{\mathcal B}}(y+s) u(s)\d s
 \ee
  are, in fact, one the adjoint of the other, $\ov{\mathcal B} = \mathcal B^\dagger$, (being that the convolution kernel is a Hankel operator). 
  Then, from standard results on the convergence of operators in trace-class norm and continuity of the Fredholm determinant~\cite{Simon}, the $\tau$-functions $\tau_{N}(x,t)$ converges,  for $x, t$ in a compact set and  as $N \to +\infty$,  to  the Fredholm determinant  $\tau(x,t)$ 
\begin{equation}
  \label{eq:new_tau}
  \tau(x,t):= {\rm det}(\id_{L^2([x,+\infty))} + \mathcal{B}\circ \ov{\mathcal{B}})= {\rm det}(\id_{L^2([x,+\infty))} + \mathcal{B}\circ {\mathcal{B}^\dagger}).
\end{equation}
We observe that $\mathcal{B}\circ {\mathcal{B}^\dagger}$ is a trace-class operator because it is the product of two Hilbert-Schmidt operators and it is also a positive operator.
This also shows that $$\tau(x,t)>0\qquad \forall x, t\in\R .$$  
  
  In order to connect this  Fredholm  determinant with the  Hilbert-Carleman determinant \eqref{tau} for the operator $\mathcal K$  defined in \eqref{eq:Kern-NLS},  we  decompose the operator $\mathcal{B}$  in two Hilbert-Schmidt operators 
\begin{equation*}
  \mathcal{B}[u]= (\mathcal{L} \circ \mathcal{F})[u]
\end{equation*}
 where $\mathcal{L}: L^{2}(\mathcal{D})\to L^{2}([x,+\infty)) $ and $\mathcal{F}: L^{2}([x,+\infty)) \to L^{2}(\mathcal{D}) $  are defined in the following way:
\begin{align}
 & \mathcal{L}[\varphi](s):=\iint_{\mathcal{D}}{\beta(w)}{\rm e}^{i(ws + w^2t)}\varphi(w)
 \frac{\d^2w}{\pi},\ \ \ \   
 \varphi(w) \in L^{2}(\mathcal{D}), \,\  s\in \left[ \right. x, +\infty \left. \right)\\
  & \mathcal{F}[f](w):={\beta(w)} \int^{+\infty}_{x}{\rm e}^{i(ws + w^2t)}f(s) \d s, \ \ \ \ \ \ w \in \mathcal{D}, \  f(s) \in L^{2}( \left[ \right. x, +\infty \left. \right)  ).
\end{align}
The same is valid also for the operator ${\ov{\mathcal{B}}} = {\mathcal B}^\dagger$.  Now using cyclicity of the determinant,
we can rewrite \eqref{eq:new_tau}  in the form 
\begin{align*}
  &\tau(x,t)= {\rm det}(\id_{L^2([x,+\infty))} + \mathcal{B}\circ \ov{\mathcal{B}})
  = {\rm det}(\id_{L^2(\mathcal{D})}+\ov{\mathcal{P}}\circ{\mathcal{P}}),
\end{align*}
where $\mathcal{P} =  \mathcal{L}^{\dag} \circ \mathcal L: L^{2}(\mathcal{D}) \to L^{2}(\mathcal{D})$ and $\ov{\mathcal{P}}=  {\mathcal F} \circ \mathcal{F}^{\dag} : L^{2}(\mathcal{D}) \to L^{2}(\mathcal{D})$ are given, after a short computation,  by (recall that $\theta(z;x,t) = ixz + it z^2$)
\begin{align}
\label{323}
  \mathcal{P}[\varphi](z) &= i\iint_{\mathcal{D}}\frac{{\ov{\beta(z)}\beta(w)}
  {\rm e}^{- \theta(\ov z;x,t)+ \theta(w;x,t)}
  }{(w - \ov{z})}\varphi(w)\frac{\d^2 w}{\pi}\\
  \ov{\mathcal{P}}[\varphi](z)&= -i\iint_{\mathcal{D}}\frac{{\beta(z)\ov{\beta(w)}}
   { \rm e}^{  \theta(z;x,t) - \theta(\ov w;x,t)}
   }{(\ov{w}- z)}\varphi(w)\frac{\d^2 w}{\pi}.
\end{align}
Both the operators  $ \mathcal{P}$ and $ \ov{\mathcal{P}}$ are trace class, being the product of two Hilbert-Schmidt operators.
The spaces $L^2(\D)$ and $L^2(\ov \D)$ are clearly isometric and we can interpret $\mathcal P, \ov {\mathcal P}$ as maps $\mathcal P:L^2(\D)\to L^2(\ov \D)$ and $\ov {\mathcal P}: L^2(\ov \D)\to L^2(\D)$ (we use the same symbols) given by the similar formulas 
\begin{align}
  \mathcal{P}[\varphi](z) &= i\iint_{\mathcal{D}}\frac{{{\beta^{*}(z)}\beta(w)}
  {\rm e}^{-\theta(z;x,t)+ \theta(w;x,t)}
  }{(w - {z})}\varphi(w)\frac{\d^2 w}{\pi}, \ \ \ \ z\in \ov \D,\ \varphi\in L^2(\D)\\
  \ov{\mathcal{P}}[\psi](z)&= -i\iint_{\ov{\mathcal{D}}}\frac{{\beta(z){\beta^{*}(w)}}
   { \rm e}^{  \theta(z;x,t) - \theta( w;x,t)}
   }{({w}- z)}\psi(w)\frac{\d^2 w}{\pi},\ \ \ \ \ z\in \D,\ \psi\in L^2(\ov \D).
\end{align}
We thus have
\begin{equation}
\label{tau_3.25}
\begin{split}
\tau(x,t)&= \text{\rm det}(\id_{L^2(\mathcal{D})} +\ov{\mathcal{P}} \circ {\mathcal{P}}) ={\rm det}
  \begin{bmatrix}
    \id_{L^2(\mathcal{D})}+\ov{\mathcal{P}} \circ \mathcal{P} & i \ov{\mathcal{P}}\\
    0  & \id_{L^2(\ov \D)}
  \end{bmatrix}\\
&={\rm det}\left(
  \begin{bmatrix}
    \id_{L^2(\mathcal{D})} & i \ov{\mathcal{P}}\\
    i \mathcal{P}   & \id_{L^2(\ov \D)}
  \end{bmatrix}\begin{bmatrix}
    \id_{L^2(\mathcal{D})} &0\\
   - i \mathcal{P}   & \id_{L^2(\ov \D)}
  \end{bmatrix}\right)={\rm det}
  \begin{bmatrix}
    \id_{L^2(\mathcal{D})} & i \ov{\mathcal{P}}\\
    i \mathcal{P}   & \id_{L^2(\ov \D)}
  \end{bmatrix}\,.
  \end{split}
  \end{equation}
With this new understanding, we  define the trace class operator $\mathcal{K}$ on $L^2(\D\cup\ov \D) \simeq L^2(\D)\oplus L^2(\ov \D)$ as:
\begin{equation}
  \label{eq:def-K}
  \mathcal{K}:= -i \mathcal{P} -i \ov{\mathcal{P}}
\end{equation}
that has kernel $\widehat{\mathcal{K}}(z,w)$ as in \eqref{eq:Kth}. Note that $\widehat{\mathcal{K}}(z,z)=0$ and therefore 
$\mbox{Tr} \mathcal{K}=0$.
Then  the  Hilbert-Carleman  determinant   of $ \mathcal{K}$ is  a Fredholm determinant  and  now we show that it coincides with $\tau(x,t)$  in \eqref{tau_3.25} as follows:
\begin{align}
 & \text{\rm det}_2(\id_{L^2(\mathcal{D} \cup \ov{\mathcal{D}})} - \mathcal{K})=\det(\id_{L^2(\mathcal{D} \cup \ov{\mathcal{D}})} - \mathcal{K})\\
 & = {\rm det}
  \begin{pmatrix}
    \id_{L^2(\mathcal{D})} & i \ov{\mathcal{P}}\\
    i \mathcal{P}   & \id_{L^2(\ov \D)}
  \end{pmatrix}=  \det(\id_{L^2(\mathcal{D})} +\ov{\mathcal{P}} \circ {\mathcal{P}}) = \tau(x,t)>0.
\end{align}
The non vanishing of the $\tau$-function  
guarantees the existence of a  solution of the $\ov\partial$-problem \eqref{eq:d_bar}     $\forall x, t\in\R$ by Theorem~\ref{Theorem3.1}.\hfill$\square$


The $\tau$-function $\tau(x,t)$ depends smoothly on $x$ because the kernel is a smooth function of $x$ and the operator $\mathcal{K}$  is acting on a compact domain $\mathcal{D}\cup\overline{\mathcal{D}}$.
We conclude from the above theorem, that we can write the solution of the NLS equation originating from the   solution of the $\ov\partial$-problem \eqref{eq:d_bar}  using the $\tau$-function  of the problem  namely
\begin{equation}
\label{psi_fredholm}
 |\psi_{N}(x,t)|^2 = 2i\partial^2_x \log\det(\id_{L^2(\mathcal{D} \cup \ov{\mathcal{D}})} - \mathcal{K}).
 \end{equation}
\begin{remark}
The class of  solutions obtained  from \eqref{psi_fredholm}  is in general  different from the class studied   in   \cite{P} and represented via a Fredholm determinant of a kernel acting on contours.
From \eqref{Hankel} and \eqref{eq:new_tau} the kernel of the $\tau$-function is the  composition of Hankel operators acting on domains of the complex plane.  Recently Bothner \cite{BB} and A. Krajenbrink \cite{KlD}  enlarged,  in a different direction with respect to our case,  the class of Hankel composition operators to obtain new class of solutions of the modified Korteweg di Vries equation.  Applications are obtained in \cite{BCT}, \cite{CCR}.
 \end{remark}

\section{Step-like oscillatory initial data}
\label{sec4}
{\color{black}For certain class of domains $\D$  called generalized quadrature domains  and $\beta$ analytic,  the  $\ov\partial$-problem can be reduced to a Riemann problem. A generalized quadrature domain ${\mathcal D}$  is   simply connected and the  boundary of  ${\mathcal D}$ is sufficiently smooth so that it can be  described by the so--called   Schwarz function  $S(z)$  of the domain ${\mathcal D}$  through the equation
\[
\overline{z}=S(z).
\]
The condition to be a generalized quadraure domain is that the Schwarz function of the domain  \cite{Gustafsson} admits  an  analytic extension to  an open and dense maximal sub-domain $\D^0\subset \D$ and  $\mathcal{L}:=\D\setminus \D^0$  consists of a {\it mother-body}, i.e.,  a collection of smooth arcs.  Using Stoke theorem and the Cauchy theorem, the $\ov\partial$-problem can be reduced to a Riemann problem with discontinuities along $\mathcal{L}$.
We consider the case when $\D$  is an  ellipse. In this case we can reduce the $\ov \pa$-problem to a Riemann-Hilbert Problem (RHP) on  two segments, one   connecting the foci of the ellipse and  the other  its Schwartz reflection as we now explain. 
The degeneration of this case to the circle was already considered in \cite{BGO}.}
For the sake of simplicity we  assume  that the focal points of the ellipse $E_1$ and $E_2$  are situated on  the imaginary axis, i.e. $E_1= i\alpha_1$ and $E_2=i\alpha_2$ with  $\alpha_2>\alpha_1>0$. The equation of the ellipse is
\[
\sqrt{\re(z)^2+(\im(z)-\alpha_1)^2}+\sqrt{\re(z)^2+(\im(z)-\alpha_2)^2}=2\rho>0,
\] where $\rho$ is chosen sufficiently small so that $\D$ lies in the upper half plane (see Figure~\ref{fig:dom}).
\begin{figure}[h!]
  \centering
     \begin{tikzpicture}
    \draw[very thick, ->] (0,-3)--(0,3) node[left] {$\im(z)$};
    \draw[very thick, ->] (-2,0)--(2,0) node[below right] {$\re(z)$};
    \draw[fill=blue!30, ultra thin] (0,1.5) ellipse (0.8 and 0.9);
    \draw (1.5,1.5) node{$\mathcal{D}$};
    \filldraw[black] (0,2) circle (1pt) node[below left] {\small$i\alpha_2$};
    \filldraw[black] (0,1.) circle (1pt) node[above right] {\small$i\alpha_1$};
    \draw[fill=blue!30, ultra thin] (0,-1.5) ellipse (0.8 and 0.9);
    \draw (1.5,-1.5) node{$\ov{\mathcal{D}}$};
    \filldraw[black] (0,-2) circle (1pt) node[above left] {\small -$i\alpha_2$};
    \filldraw[black] (0,-1.) circle (1pt) node[below right] {\small -$i\alpha_1$};
    \draw[thick,red] (0,1.)--(0,2.);
    \draw[thick,red] (0,-2.)--(0,-1.);
  \end{tikzpicture}
  \caption{ The domains $\D$ and $\overline{\D}$.}\label{fig:dom}
\end{figure}

 %
%


   We can apply Green's theorem for $z\notin \D$ in the formulas below, 
 and obtain
 \begin{equation}
 \label{eq:int_prob1}
  \iint\limits_{\D}\frac{{\rm e}^{2\theta(w,x,t)} \beta(w)^2\d^2w}{\pi(z-w)}=
   \int_{\partial\D}\frac{\beta(w)^2\overline{w}{\rm e}^{2\theta(w;x,t)}}{ z-w} \frac{\d w}{2\pi  i }
   \end{equation}
   and similar expressions for the integral over $\overline{\D}$.
  We consider the Schwartz function $S$ of the ellipse
\begin{equation}
\begin{split}
  \ov{z} = S(z)&=\left(1 -2\frac{\rho^2}{c^2}\right)(z-i y_0)+2\frac{\rho}{c^2}\sqrt{\rho^2-c^2}\tilde{S}(z)
\end{split}
\end{equation}
where $\tilde{S}(z):= \sqrt{ (z-i\alpha_1)(z-i\alpha_2)}$,  $ y_0:=\frac {\alpha_1+\alpha_2}2$ and   $c := \frac{\alpha_2-\alpha_1}2$. 
  The  Schwartz function $S$  is analytic  in $\C \setminus \mathcal{I}$, where  $\mathcal{I}:=[i\alpha_1,i\alpha_2]$, with boundary values $S_{\pm}(w)$. 
   For $z\notin \D$ (or $z \notin \ov{\D}$), 
the integral along  the boundary $\partial \D$  ($\partial \overline{\D}$) of the ellipse in \eqref{eq:int_prob1}  can be deformed to a line integral  on  the segment $\mathcal{I}=[i\alpha_1,i\alpha_2]$  ($\overline{\mathcal{I}}:=[-i\alpha_2,-i\alpha_1]$), namely
\begin{align*}
 \int_{\partial \D}\frac{\beta(w)^2\overline{w}{\rm e}^{2\theta(w;x,t)}}{ z-w} \frac{\d w}{2\pi  i }&=
  \int_{\partial\D}\frac{\beta(w)^2S(w){\rm e}^{2\theta(w;x,t)}}{ z-w} \frac{\d w}{2\pi  i }\\
&= 
 \int_{\mathcal{I}}\frac{\beta(w)^2\delta S(w){\rm e}^{2\theta(w;x,t)}}{ z-w} \frac{\d w}{2\pi  i }
\end{align*}
where $\delta S(z)= S_{+}(z) - S_{-}(z)$. 
We define the matrix $T(z)$ as 
\begin{equation}
T(z):=\left\{
\begin{array}{ll}
&Y^\infty(z),\quad z\in\mathbb{C}\backslash\{D_{\gamma_+}\cup D_{\gamma_-}\}\\
&Y^\infty(z)J(z),\quad z\in D_{\gamma_+}\cup D_{\gamma_-}
\end{array}
\right.
\end{equation}
with $Y^\infty$ as in \eqref{eq:new_R-H_2}, $D_{\gamma_\pm}$ the interior regions of the contours $\gamma_\pm$, respectively,  and    
\begin{equation*}
  \begin{split}
J(z)= \begin{pmatrix}
\hspace{-20pt}    1 &\hspace{-60pt}\displaystyle \int\limits_{\overline{\mathcal{I}}}\frac{(\beta^*(w))^2\delta S^*(w){\rm e}^{-2\theta(w;x,t)}}{ w-z} \frac{\d w}{2\pi  i }{\bf 1}_{{D}_{\gamma_-}}(z) \\
\displaystyle   \int\limits_{\mathcal{I}}\frac{\beta(w)^2\delta S(w){\rm e}^{2\theta(w;x,t)}}{ z-w} \frac{\d w}{2\pi  i }{\bf 1}_{{D}_{\gamma_+}}(z)& 1
  \end{pmatrix}.
  \end{split}
\end{equation*}
This new  matrix $T(z)$ extends analytically across  $\gamma_+\cup\gamma_-$. The matrix $J(z)$ has a discontinuity on  $\I\cup\overline{\I}$ so that   $T(z)$ solves the RHP described below:
\begin{problem} \label{RHP41} The matrix $T(z)$ is analytic and analytically invertible  in  $\mathbb{C}\backslash \{\I\cup \overline{\I}\}$  with jump conditions
\begin{equation}
\begin{split}
  \label{eq:ell_R-H-prob}
  T_{+}(z)&=T_{-}(z)J^{-1}_-J_+=T_{-}(z){\rm e}^{\theta(z;x,t) \sigma_{3}}V_{0}(z){\rm e}^{-\theta(z;x,t) \sigma_{3}},\quad z\in
  \I\cup \overline{\I}\\
    V_{0}(z)&= \begin{pmatrix}
    1 &{\bf 1}_{\I^*} \delta S^*(z) (\beta^*(z))^2 \\
    -{\bf 1}_{\I}\delta S(z) \beta(z)^2 & 1
  \end{pmatrix},
\end{split}
\end{equation}
 and
$T(z)={\bf 1}+O(z^{-1})$, as $z\to\infty$.
\end{problem}
We can find the same Problem~\eqref{RHP41} also while studying the infinite soliton limit when the  spectral points are  distributed along the segments  $\I\cup\overline{\I}$.
The solution of the NLS equation is recovered from the solution of the RHP for $T(z;x,t)$ by the relation
\[
\psi_N(x,t)=2i \lim_{z \to \infty}z(T(z;x,t))_{12}.
\]

\subsection{Charachterization of the initial datum}

We now focus on the Problem~\eqref{RHP41} at $t=0$. Let us define the  function ${r} (z):=\delta S(z)\beta(z)^2$,
so that  the  RHP problem becomes
\begin{equation}
\begin{split}
  \label{eq:ell_z}
  & T_{+}(z)=T_{-}(z)
  \begin{pmatrix}
    1 & {r} ^*(z)e^{-2 izx}{\bf 1}_{\overline{\mathcal{I}}}(z)\\
    -{r} (z)e^{2 izx}{\bf 1}_{\mathcal{I}}(z) & 1
  \end{pmatrix}\qquad z \in \mathcal{I}\cup \overline{\mathcal{I}},\\
  &T(z)={\bf 1}+O(z^{-1}),  \mbox{   as $z\to\infty$.}
  \end{split}
\end{equation}
We also introduce the  elliptic  curve 
    \begin{equation}
    \label{C}
      \mathcal{C}=\{(w,z) \in \C^2 | w^2= R(z)=(z^2+\alpha_1^2)(z^2 + \alpha_2^2)\}. 
    \end{equation}
    The projection $\pi:\mathcal{C}\to\C$,    $\pi((z,w))=z$, realizes $  \mathcal{C}$ as a two sheeted covering of the complex plane of the radical $\sqrt{(z^2+\alpha_1^2)(z^2 + \alpha_2^2)}$, with branch-cuts along the segments $\I \cup \overline{\I}$ 
and the determination chosen so that the radical behaves as $z^2$ near $\infty$. 
     To  make the curve $\mathcal{C}$ a complex Riemann surface $\mathcal{S}$ we  add the two points at infinity  $\infty^{1,2}$ where  $\infty^1$ is     on the first sheet ($+$ sign of the radical) and  $\infty^2$ is the second sheet ($-$ sign).  When  indicating a point on the surface as $z$, it is understood that   $(z,w)\in\mathcal{C}$ is a point on the first sheet.
    We consider the  normalized holomorphic one form
  
    \[
    \omega=\left(\oint_{\alpha}\frac{dz}{w}\right)^{-1}\frac{dz}{w}= \frac{i(\alpha_{2}+\alpha_1)}{4 K(m)}\frac{dz}{w},  \quad m=\frac{ 4\alpha_1\alpha_2}{(\alpha_1+\alpha_2)^2}
    \]
    where $K(m)=\int_{0}^1\dfrac{ds}{\sqrt{1-s^2}\sqrt{1-m s^2}}$ is the complete elliptic integral of the first kind.
     We define the modulus 
   \[
   \tau:=\oint_{\beta}\omega=\frac{i(\alpha_1+\alpha_{2})}{2 K(m)}  \int_{i\alpha_2}^{i\alpha_1}\frac{dz}{\sqrt{R(z)_+}}=\frac{iK(m')}{ K(m)},\quad m'=1-m,
   \]
   where the  $\alpha$ and  $\beta$ cycles are defined as in the Figure~\ref{fig_cycles}.
    \begin{figure}[t]
    \centering
    \includegraphics[width=0.5\textwidth]{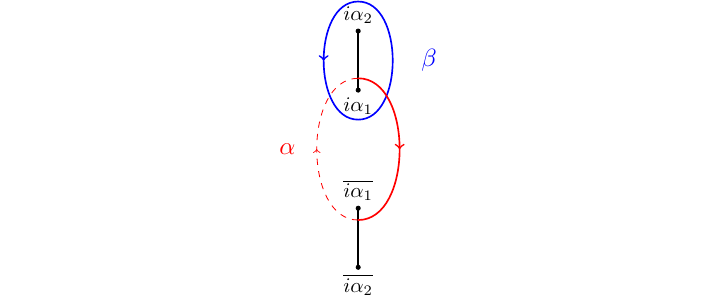}
    \caption{ 
    The homology  basis }\label{fig_cycles}
  \end{figure}
 We recall  the definition of the   Jacobi  theta  function  $\vartheta: \C \to \C$  with modulus $\tau$ 
\begin{equation}
  \label{eq:Jac-theta}
  \vartheta(z,\tau):= \sum_{n \in \mathbb{Z}} e^{i\pi n^{2}\tau + 2i\pi n z}
\end{equation}
The goal of the section is to prove the following Theorem:
\begin{theorem}
\label{thmPsi0}
   The solution of the Riemann-Hilbert Problem~\eqref{eq:ell_z} generates an initial datum $\psi_{0}(x)$  of the NLS equation~\eqref{NLS} that is step-like oscillatory with the following behaviours at $x \to \pm \infty$ : 
    \begin{equation}
     \label{eq:NLS-t}
     \psi_{0}(x)=\left\{\begin{matrix}
         \mathcal{O}(\e^{-c_+ x}) \text{ as } x \to +\infty\\
         -i\e^{2ig_{\infty}x+2i\phi_\infty}(\alpha_{2} - \alpha_{1})\frac{\vartheta(0;\tau)}{\vartheta(\frac{1}{2};\tau)}\frac{\vartheta(\frac{\Omega x  - i\Delta}{2 \pi} +\frac{1}{2};\tau)}{\vartheta(\frac{\Omega x - i\Delta}{2 \pi })} + \mathcal{O}(\e^{c_- x})  \text{ as } x \to -\infty
       \end{matrix}\right.
   \end{equation}
 where $c_{\pm}$  are positive constants, $g_{\infty},\phi_\infty\in \mathbb{R}$ are constants, $\vartheta(z;\tau)$ is the {\it Jacobi theta function} defined in \eqref{eq:Jac-theta} and  
   \begin{equation}
      \label{eq:Omega}
      \Omega =- \frac{\pi(\alpha_1+\alpha_2)}{ K(m)}\in\R,
    \end{equation}
    \begin{equation}
      \label{eq:delta}
      \Delta=-\frac{i(\alpha_{2}+\alpha_1)}{2 K(m)}\left[\int_{i\alpha_1}^{i\alpha_2}\frac{\log {r} (\zeta)}{\sqrt{R(\zeta)}_+}d \zeta - \int_{-i\alpha_2}^{-i\alpha_1}\frac{\log {r} ^{*}(\zeta)}{\sqrt{R(\zeta)}_+}d \zeta \right]\in i\R.
    \end{equation}
 where $\sqrt{R(z)}$  is a multivalued complex function, analytic in $\C \setminus\{\mathcal{I} \cup \overline{\mathcal{I}}\}$  and positive in the interval $(-i\alpha_1,i\alpha_1)$  and $\sqrt{R(z)}_\pm$
   denotes the value  on the positive/negative  side of the oriented intervals $\mathcal{I}$   and $ \overline{\mathcal{I}}$.
 \end{theorem}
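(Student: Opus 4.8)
The plan is to carry out a Deift--Zhou steepest descent analysis of the Riemann--Hilbert Problem~\eqref{eq:ell_z}, exactly as in the genus-one situations treated in \cite{Girotti2021,GGJMM}. The oscillatory factors $e^{\pm 2izx}$ on the cuts $\mathcal I\cup\overline{\mathcal I}$ are purely imaginary exponentials in the variable $z$, so the sign of $\re(\pm 2izx)$ on the branch cuts changes with the sign of $x$; the asymptotic behaviour is therefore governed by a $g$-function built on the elliptic curve $\mathcal C$ in \eqref{C}. First I would introduce a scalar $g$-function $g(z)=g(z;x)$, holomorphic on $\mathbb C\setminus(\mathcal I\cup\overline{\mathcal I})$, normalized by $g(z)=\mathcal O(z^{-1})$ at infinity, with jumps chosen so that the conjugated matrix $T^{(1)}(z):=T(z)e^{-(ixz+g(z))\sigma_3}$ has jumps that, on the relevant portions of the contour, are either exponentially close to the identity or constant. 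Because the surface has genus one, the normalization forces a Riemann-type constraint (the vanishing of a $B$-period), and the correct $g$ will involve the holomorphic differential $\omega$ and the Abel map; the constants $\Omega$ and $\Delta$ in \eqref{eq:Omega}--\eqref{eq:delta} arise precisely as (respectively) the $x$-derivative of a period integral and the contribution of $\log r$, $\log r^*$ through a scalar Plemelj/Cauchy transform on the cuts.

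Next I would split into the two regimes. For $x\to+\infty$: on $\mathcal I$ one has $\re(2izx)<0$ (for $z$ on the segment from $i\alpha_1$ to $i\alpha_2$, $iz$ has negative real part), and on $\overline{\mathcal I}$ the conjugate sign, so already at the level of \eqref{eq:ell_z} the off-diagonal entries are exponentially small; no $g$-function is needed and a small-norm argument gives $T(z)=\mathbb I+\mathcal O(e^{-c_+x})$ uniformly, whence $\psi_0(x)=\mathcal O(e^{-c_+x})$ by $\psi_0(x)=2i\lim_{z\to\infty}z\,T_{12}(z)$. For $x\to-\infty$: now the exponents have the ``wrong'' sign on the cuts, so I open lens contours around $\mathcal I$ and $\overline{\mathcal I}$, factor the jump matrix as lower-upper-lower (resp.\ its conjugate), and absorb the oscillation into the lens factors, which become exponentially close to $\mathbb I$ away from the endpoints once $g$ is chosen with the right asymptotic slopes on the cuts. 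After conjugation by $g$ and a constant diagonal ``delta''-type transformation that removes $\log r,\log r^*$ from the jump on the cuts, the model RHP left on $\mathcal I\cup\overline{\mathcal I}$ is the standard genus-one outer parametrix, solved explicitly by ratios of Jacobi theta functions on $\mathcal S$ via the Its--Kitaev / Deift--Venakides--Zhou construction; its $(1,2)$-entry at $z=\infty$ produces exactly the factor $(\alpha_2-\alpha_1)\tfrac{\vartheta(0;\tau)}{\vartheta(1/2;\tau)}\tfrac{\vartheta((\Omega x-i\Delta)/2\pi+1/2;\tau)}{\vartheta((\Omega x-i\Delta)/2\pi;\tau)}$, while the remaining diagonal piece of $g$ and of the delta-transformation contribute the phase $e^{2ig_\infty x+2i\phi_\infty}$ with real $g_\infty$, $\phi_\infty$. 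Local parametrices at the four branch points $\pm i\alpha_1,\pm i\alpha_2$ are built from Airy functions in the usual way, and matching them to the outer parametrix yields the $\mathcal O(e^{c_-x})$ error (the exponential, rather than algebraic, smallness of the error coming from the fact that $r(z)=\delta S(z)\beta(z)^2$ extends analytically across the cuts, since $\beta$ is analytic and $\delta S$ is the algebraic jump of the Schwarz function, allowing the lens contours to be pushed a fixed distance and the endpoint contributions to be controlled).

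I would then assemble the small-norm estimate for the final error matrix $E(z)=T^{(\text{final})}(z)\,(\text{parametrix})^{-1}$, which satisfies a RHP with jump $\mathbb I+\mathcal O(e^{-c|x|})$ off small circles and $\mathbb I+\mathcal O(|x|^{-1})$ on the Airy circles but shrinking exponentially in the lens openings, to conclude $E(z)=\mathbb I+\mathcal O(e^{c_-x})$ uniformly, and read off $\psi_0(x)$ from the $z^{-1}$ coefficient of the full product. The main obstacle I expect is the construction and global consistency of the genus-one $g$-function: one must verify that the sign conditions on $\re(ixz\pm g)$ hold on the correct sides of the cuts for all sufficiently negative $x$ (so that the lens factors are genuinely subdominant), that the $B$-period normalization is solvable and produces the real linear-in-$x$ phase velocity $\Omega$ and the real shift inside $\Delta$, and that the theta-function argument stays off the theta divisor so the outer parametrix is nonsingular --- equivalently, that the RHP~\eqref{eq:ell_z} is itself solvable for each fixed $x$, which we already know from the nonvanishing of the $\tau$-function established in Theorem~\ref{th:FD}. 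Once the $g$-function and the sign chart are in place, the rest is the standard machinery and the identification of constants is a finite computation with elliptic integrals and theta identities.
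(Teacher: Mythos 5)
Your overall architecture is the same as the paper's: an $x$-independent $g$-function built from the differential $(\zeta^2+\kappa)\,d\zeta/\sqrt{R(\zeta)}$ on the genus-one curve, a scalar ``$f$'' (your delta-type) transformation removing $\log r,\log r^*$ from the cuts and producing $\Delta$, lens opening, a theta-function outer parametrix whose $(1,2)$-entry at infinity gives the stated elliptic formula, and a direct small-norm argument for $x\to+\infty$. These steps all match.

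The genuine gap is in your treatment of the branch points $\pm i\alpha_1,\pm i\alpha_2$ and the resulting error estimate. First, Airy parametrices are not the right local model here: the phase $g(z)+z$ vanishes like $(z\mp i\alpha_j)^{1/2}$ at the band edges (see the endpoint behaviour in Lemma~\ref{propertygf}), not like a $3/2$-power, so the standard local model would be of Bessel type. More importantly, \emph{any} local-parametrix construction produces a mismatch of size $\mathcal{O}(|x|^{-1})$ on the matching circles, hence only an algebraic error, whereas the theorem claims $\mathcal{O}(\e^{c_-x})$. The paper obtains the exponential rate by avoiding local parametrices altogether: because $r(z)=\delta S(z)\beta(z)^2$ with $\delta S(z)\sim|z-i\alpha_j|^{1/2}$, the analytic continuation $\hat r$ satisfies $\hat r_+(z)+\hat r_-(z)=0$ on the short segments $[i\alpha_2,i(\alpha_2+h)]$ and $[i(\alpha_1-h),i\alpha_1]$ (Assumption~\ref{ass:ass_1}, following \cite{GGJMM}), so the two lens jumps cancel there and the lenses can be detached from the endpoints so as to fully enclose $\mathcal{I}$; the error matrix then has jumps that are uniformly $\1+\mathcal{O}(\e^{c_-x})$ and the small-norm lemma closes the argument. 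Your parenthetical remark about the analytic continuation of $r$ points in the right direction, but as written (analytic continuation plus Airy parametrices plus ``pushing the lens contours a fixed distance'') it does not yield a contour configuration on which the jumps are uniformly exponentially close to the identity near the endpoints, so the claimed $\mathcal{O}(\e^{c_-x})$ remainder is not justified by your construction. You would also need to record, as the paper does, that $\epsilon=(x\Omega-i\Delta)/2\pi$ avoids the theta divisor because $\Omega\in\R$ and $\Delta\in i\R$ while the unique zero of $\vartheta$ sits at $\tfrac12+\tfrac\tau2$; invoking solvability of the RHP via Theorem~\ref{th:FD} is not the mechanism used (and that theorem concerns the $\ov\pa$-problem, which coincides with Problem~\ref{RHP41} only under the quadrature-domain reduction).
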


In order to prove the above theorem we follow the established procedure of the Deift--Zhou steepest descent analysis \cite{DZ} which involves several reformulations of the problem:
\begin{itemize}
\item introduction of piece-wise analytic scalar functions  $g$ and $f$;
\item  first transformation: $T(z;x)\to T^{(1)}(z;x)$;
\item  second transformation:  $T^{(1)}(z;x)\to T^{(2)}(z;x)$;
\item construction of the outer parametrix $X(z;x)$ and estimate of the remainder term  $\mathcal{E}=T^{(2)}(z;x)X^{-1}(z;x)$.
\end{itemize}

\paragraph{\bf The $g$-function and the  $f$-function.}
We introduce the functions  $g$ and $ f$  defined by:
\bea
  \label{eq:g_funct}
  g(z)&:= -z +\int_{i\alpha_{2}}^{z}\frac{\zeta^{2} + \kappa}{\sqrt{R(\zeta)}} d \zeta
\\
      \label{eq:kappa}
      \kappa&:= -\frac{\int_{-i\alpha_1}^{i\alpha_1}\frac{\zeta^{2}\d \zeta}{\sqrt{R(\zeta)}}}{\int_{-i\alpha_1}^{i\alpha_1}\frac{\d \zeta}{\sqrt{R(\zeta)}} }= \alpha_{2}^{2}\left(1-\frac{E(m_1)}{K(m_1)}\right),\quad m_1=\frac{\eta_1^2}{\eta_2^2}\,;
\eea
where $E(m_1)= \int_{0}^{1}\sqrt{\frac{1 - m_1 s^2}{1 - s^2}}d s$ is the complete elliptic integral of the second kind.
The phase $\Omega$ in \eqref{eq:Omega} is simply
\[
\Omega:=\oint_{\beta}\frac{\zeta^{2} + \kappa}{\sqrt{R(\zeta)}} d \zeta=- \frac{\pi(\alpha_1+\alpha_2)}{ K(m)}\in\R,
\]
that can be easily be obtained from Riemann bilinear relations \cite{Mumford2007,Mumford2008}.
\begin{align}
  \label{eq:f_funct}
    f(z):= \exp\Bigl[\frac{R(z)}{2\pi i} \Bigl(&-\int_{i\alpha_1}^{i\alpha_{2}}\frac{\log{{r} (\zeta)}}{\sqrt{R(\zeta)}_{+}(\zeta - z)}d \zeta + \int_{-i\alpha_2}^{-i\alpha_1}\frac{\log{{r} ^{*}(\zeta)}}{\sqrt{R(\zeta)}_{+}(\zeta - z)}d \zeta + \nn\\
        & + \int_{-i\alpha_1}^{i\alpha_1}\frac{\Delta}{\sqrt{R(\zeta)}(\zeta - z)}d \zeta  \Bigr) \Bigr]  ;
\end{align}

  The following properties of the functions $f,g$ follow from the above definition.
  \begin{lemma}
\label{propertygf}
The functions $g$ and $f$ defined in \eqref{eq:g_funct} and  \eqref{eq:f_funct}
satisfy:
\begin{enumerate}
\item Schwartz symmetry:
 \begin{equation*}
    \overline{g(\bar{z})}=g(z) \quad \overline{f(\bar{z})}=f^{-1}(z);
  \end{equation*}
  \item as $z \to \infty$
  \begin{equation}
    \label{eq:cond-g}
    \begin{split}
&g(z) = g_\infty+ \left[\frac{\alpha_1^{2} +\alpha_2^{2}}{2} - \kappa \right]\frac{1}{z} + \mathcal{O}\left(z^{-2}\right) \\
 &f(z) = \e^{i\phi_\infty}+ \mathcal{O}\left(z^{-2}\right);
 \end{split}
\end{equation}
where $g_{\infty} \in \mathbb{R}$ and  $\phi_\infty \in \mathbb{R}$ are constants;
\item  for $z$ near the endpoints $\pm i\alpha_j$, for $j=1,2$, 
  \begin{align*}
    g(z) + z =O((z - i\alpha_2)^{1/2}) &\quad g(z) + z= O((z + i\alpha_2)^{1/2});\\
    g_+(z) + z = \frac{\Omega}{2} + O((z - i\alpha_1)^{1/2}) &\quad g_+(z) + z  =\frac{\Omega}{2} + O( (z + i\alpha_1)^{1/2});
  \end{align*}
  
\item the function $g(z)$ solves the following scalar Riemann-Hilbert Problem:
\begin{equation}
  \label{eq:g_R-H}
  \begin{split}
    & g_{+}(z) + g_{-}(z) = -2z \text{ for } z \in \mathcal{I} \cup \overline{\mathcal{I}}\\
    & g_{+}(z) - g_{-}(z) = \Omega \text{ for } z \in [-i\alpha_{1};i\alpha_{1}]\\
  \end{split}
\end{equation}
  with the real constant $\Omega$ as in \eqref{eq:Omega}.
\item  the function $f(z)$ satisfies the scalar Riemann-Hilbert problem:
\begin{equation}
  \label{eq:f_R-H}
  \begin{split}
    & f_{-}(z)f_{+}(z)={r} ^{-1}(z) \text{ for } z \in \mathcal{I}\\
    & f_{-}(z)f_{+}(z)= {r} ^{*}(z) \text{ for } z \in \overline{\mathcal{I}}\\
    & \frac{f_{+}(z)}{f_{-}(z)}= e^{\Delta} \text{ for } z \in [-i\alpha_{1};i\alpha_{1}]\\
  \end{split}
\end{equation}
with $\Delta$ defined in \eqref{eq:delta}.
\end{enumerate}
\end{lemma}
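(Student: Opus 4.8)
The plan is to read off all five properties directly from the integral representations \eqref{eq:g_funct} and \eqref{eq:f_funct}, using three structural facts. First, since $R(z)=(z^{2}+\alpha_1^{2})(z^{2}+\alpha_2^{2})$ is even and satisfies $R(\bar z)=\overline{R(z)}$, the branch $\sqrt{R(z)}$ normalized by $\sqrt{R(z)}\sim z^{2}$ at infinity is itself even, obeys $\sqrt{R(\bar z)}=\overline{\sqrt{R(z)}}$, is analytic and positive on $(-i\alpha_1,i\alpha_1)$, has an integrable $(z\mp i\alpha_j)^{-1/2}$ singularity at each branch point $\pm i\alpha_j$, and satisfies $\sqrt{R}_+=-\sqrt{R}_-$ on $\mathcal I\cup\overline{\mathcal I}$. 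Second, $\kappa$ in \eqref{eq:kappa} is chosen exactly so that $\int_{-i\alpha_1}^{i\alpha_1}\frac{\zeta^{2}+\kappa}{\sqrt{R(\zeta)}}\,\d\zeta=0$, i.e.\ so that the second-kind differential $\frac{\zeta^{2}+\kappa}{\sqrt{R}}\,\d\zeta$ has vanishing $\alpha$-period, its only remaining period being $\Omega=\oint_{\beta}\frac{\zeta^{2}+\kappa}{\sqrt{R}}\,\d\zeta$ as already evaluated above through the Riemann bilinear relations. Third, $\Delta$ in \eqref{eq:delta} is precisely the constant that makes the bracketed Cauchy transform in the exponent of \eqref{eq:f_funct} gain one extra power of $z^{-1}$ at infinity; equivalently, \eqref{eq:delta} is the identity $\Delta\int_{-i\alpha_1}^{i\alpha_1}\frac{\d\zeta}{\sqrt{R}}=\int_{\mathcal I}\frac{\log r\,\d\zeta}{\sqrt{R}_+}-\int_{\overline{\mathcal I}}\frac{\log r^{*}\,\d\zeta}{\sqrt{R}_+}$, once a Landen identity is used to express $\int_{-i\alpha_1}^{i\alpha_1}\frac{\d\zeta}{\sqrt{R}}$ in terms of $K(m)$.

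For the Schwartz symmetry I would conjugate the defining integral of $g$ and change variables $\zeta\mapsto\bar\zeta$: using reality of $\kappa$ together with the vanishing of $\int_{i\alpha_2}^{-i\alpha_2}\frac{\zeta^{2}+\kappa}{\sqrt{R}}\,\d\zeta$ (which splits into the $\alpha$-period over $(-i\alpha_1,i\alpha_1)$, zero by the choice of $\kappa$, plus two half-$\beta$-periods over $\mathcal I$ and $\overline{\mathcal I}$ that cancel in pairs thanks to the reflection $\zeta\mapsto-\zeta$ and $\sqrt{R}_+=-\sqrt{R}_-$) one obtains $\overline{g(\bar z)}=g(z)$; for $f$ one conjugates the exponent, invokes $\overline{r(\bar\zeta)}=r^{*}(\zeta)$ with its consequence for $\log r^{*}$, and $\Delta\in i\R$, so that the whole exponent flips sign and $\overline{f(\bar z)}=f(z)^{-1}$. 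For the behaviour at infinity I would Laurent-expand in $1/z$: in $g$ the subtracted $-z$ cancels the $\mathcal O(z)$ term produced by $\int^{z}\frac{\zeta^{2}+\kappa}{\sqrt{R}}\,\d\zeta$ and a short computation yields the stated $1/z$-coefficient $\frac{\alpha_1^{2}+\alpha_2^{2}}{2}-\kappa$; writing $\log f=\sqrt{R(z)}\,B(z)$ with $B$ the bracketed Cauchy transform, the definition \eqref{eq:delta} of $\Delta$ kills the $1/z$-coefficient of $B$, so that $\sqrt{R}\,B$ equals a constant plus lower-order terms. Reality of $g_\infty$ and $\phi_\infty$ then follows from the Schwartz symmetry evaluated along $z\in\R$: $g(z)\in\R$ and $|f(z)|=1$ for large real $z$.

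The endpoint behaviour follows from square-root integrability of $1/\sqrt{R}$ at each $\pm i\alpha_j$: one writes $g(z)+z=\int_{q}^{z}\frac{\zeta^{2}+\kappa}{\sqrt{R(\zeta)}}\,\d\zeta=\mathcal O\bigl((z-q)^{1/2}\bigr)$ with $q$ the nearby branch point, once the value at $q$ itself is known, and that value is $0$ at $q=i\alpha_2$ (the base point), is $\int_{i\alpha_2}^{-i\alpha_2}\frac{\zeta^{2}+\kappa}{\sqrt{R}}\,\d\zeta=0$ at $q=-i\alpha_2$ (the same cancellation as above), and is $\int_{i\alpha_2}^{i\alpha_1}\frac{\zeta^{2}+\kappa}{\sqrt{R(\zeta)}_+}\,\d\zeta=\frac{\Omega}{2}$ — half the $\beta$-period — at $q=i\alpha_1$, and likewise at $q=-i\alpha_1$ (the gap $(-i\alpha_1,i\alpha_1)$ contributing nothing by the choice of $\kappa$). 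For the two scalar Riemann--Hilbert problems I would apply the Plemelj--Sokhotski formula together with $\sqrt{R}_+=-\sqrt{R}_-$. For $g$: on $\mathcal I\cup\overline{\mathcal I}$ the sign flip of $\sqrt{R}$ forces $(g+z)_++(g+z)_-=0$, i.e.\ $g_++g_-=-2z$; across the gap $[-i\alpha_1,i\alpha_1]$, where $\sqrt{R}$ is single-valued, $g_+-g_-$ is the period obtained by encircling $\mathcal I$, which is $z$-independent precisely because the $\alpha$-period vanishes, and equals $\Omega$. For $f$: with $\log f=\sqrt{R}\,B$, applying Plemelj to the three Cauchy integrals in \eqref{eq:f_funct} gives $\log f_++\log f_-=\sqrt{R}_+(B_+-B_-)=-\log r$ on $\mathcal I$, hence $f_+f_-=r^{-1}$; the same computation with the second integral gives $f_+f_-=r^{*}$ on $\overline{\mathcal I}$; and on the gap, where only the third integral jumps and $\sqrt{R}$ is analytic, $\log f_+-\log f_-=\sqrt{R}\cdot\frac{\Delta}{\sqrt{R}}=\Delta$, i.e.\ $f_+/f_-=\e^{\Delta}$.

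The main obstacle I anticipate is organizational rather than conceptual: one must fix once and for all the orientations of $\mathcal I$, $\overline{\mathcal I}$ and $[-i\alpha_1,i\alpha_1]$, the side conventions for $\sqrt{R}_\pm$, and the homology basis $(\alpha,\beta)$, so that every Plemelj sign, the identification of the relevant loop period with $+\Omega$ rather than $-\Omega$, and the overall sign in \eqref{eq:delta} come out exactly as stated. The only ingredient carrying real analytic content beyond routine manipulation of Cauchy transforms on arcs is the reflection symmetry $\zeta\mapsto-\zeta$, used for the various half-period cancellations and — if one insists on the sharp $\mathcal O(z^{-2})$ in \eqref{eq:cond-g} rather than merely $\mathcal O(z^{-1})$ — once more together with the Schwartz symmetry; this requires care but introduces no new idea.
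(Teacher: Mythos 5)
Your proposal is correct and supplies essentially the verification the paper leaves implicit (the lemma is stated with no proof beyond ``follow from the above definition''): the choice of $\kappa$ kills the $\alpha$-period, the choice of $\Delta$ kills the growing term of $\sqrt{R(z)}$ times the bracketed Cauchy transform at infinity, and Plemelj together with $\sqrt{R}_+=-\sqrt{R}_-$ yields all the jump relations, with the reflection $\zeta\mapsto-\zeta$ handling the Schwartz symmetry and the half-period identifications at the endpoints. Two small caveats: the $z$-independence of $g_+-g_-$ on the gap is automatic because it is a closed-loop period (the vanishing of the $\alpha$-period is instead what makes $g$ single-valued off $[-i\alpha_2,i\alpha_2]$ and pins the jump to exactly $\Omega$), and the sharp $\mathcal O(z^{-2})$ for $f$ in \eqref{eq:cond-g} does not obviously follow from your symmetry argument --- the $z^{-1}$ coefficient of $f$ involves the second moment of the density in the exponent of \eqref{eq:f_funct}, which need not vanish --- but only $f(z)=\e^{i\phi_\infty}+\mathcal O(z^{-1})$ is actually used later in the paper.
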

%
\noindent
\paragraph{First transformation: $\boldsymbol{T(z;x)\to T^{(1)}(z;x)}$.}
With the functions $f$ and $g$ we define the matrix
\begin{equation}
  \label{eq:nsd_trans}
  T^{(1)}(z;x)= \e^{-i(g_{\infty}x+\phi_\infty)\sigma_3}T(z;x)e^{ig(z)x \sigma_3}f(z)^{\sigma_3}.
\end{equation}
As a consequence of the transformation \eqref{eq:nsd_trans}  and of the properties of $f, g$ established in the Lemma \ref{propertygf} we obtain that 
\begin{equation}
\label{psi0T1}
\psi_0(x)=2i\e^{i(g_{\infty}x+\phi_\infty)}\lim_{z\to\infty} \left[zT^{(1)}_{12}(z;x)e^{ig(z)x}f(z)\right],
\end{equation}
and the matrix $T^{(1)}(z;x)$ satisfies  the following RHP.
\begin{problem}
\label{RHPGamma1}
To find a $2\times 2 $ matrix $T^{(1)}(z;x,t)$  analytic and invertible  in $\C\setminus [-i\alpha_2,i\alpha_2]$  with the following boundary value conditions:
\begin{equation}
  \label{eq:new_ell_z}
  T^{(1)}_{+}(z)=T^{(1)}_{-}(z)V^{(1)}(z)
\end{equation}
where
  \begin{equation}
    \label{eq:jump_m_new}
    V^{(1)}(z) =\left\{
    \begin{split}
   & \begin{pmatrix}
      e^{ix(g_+ - g_-)}\frac{f_{+}}{f_{-}} & {\bf 1}_{\overline{\mathcal{I}}}\\
      -{\bf 1}_{\mathcal{I}} &   e^{-ix(g_+ - g_-)}\frac{f_{-}}{f_{+}}
     \end{pmatrix} \text{ for } z \in \mathcal{I} \cup \overline{\mathcal{I}} \\
   &
   \begin{pmatrix}
     e^{ix\Omega + \Delta} & 0\\
     0 & e^{-ix\Omega -\Delta}
   \end{pmatrix} \text{ for } z \in [-i\alpha_1;i\alpha_1]\\
    \end{split} \right.
  \end{equation}
  and
  
 $$T^{(1)}(z) = \1+O(z^{-1}),\quad \mbox{as $z\to\infty$}.$$
%
 \end{problem}
 %

\paragraph{\bf The Opening of the lenses.} We factorize the jump matrices~\eqref{eq:jump_m_new} for $z \in \mathcal{I} \cup \overline{\mathcal{I}}$ as follows:
  \begin{itemize}
 \item for $ z \in \overline{\mathcal{I}}$
  \begin{equation*}
      V^{(1)}(z) =
       \begin{pmatrix}
          1 & 0\\
        e^{2ix(g_- + z)}\frac{(f_{-}(z))^2}{{r} (z)} & 1 
      \end{pmatrix}
       \begin{pmatrix}
        0 & 1\\
        -1 & 0
      \end{pmatrix}
      \begin{pmatrix}
        1 & 0\\
         e^{2ix(g_+ +z)}\frac{(f_{+}(z))^2}{{r} (z)} & 1
       \end{pmatrix}
  \end{equation*}
 \item for $ z \in \mathcal{I}$ 
  \begin{equation*}
     V^{(1)}(z)=
      \begin{pmatrix}
        1 &-\frac{ e^{-2ix(g_- +z)}}{{r} ^{*}(z)(f_{-}(z))^2}\\
        0 & 1
      \end{pmatrix}
      \begin{pmatrix}
        0 & 1\\
        -1 & 0
      \end{pmatrix}
      \begin{pmatrix}
        1 & -\frac{ e^{-2ix(g_+ +z )}}{{r} ^{*}(z)(f_{+}(z))^2}\\
        0 & 1
      \end{pmatrix}.
  \end{equation*}
\end{itemize}
We define with $\mathcal{U}_{\pm}$   the open set on the left ($+$) and the right  ($-$) of the segment $\mathcal{I}$ as  shown  in Figure~\ref{fig:2}. In the same way, we define $\overline{\mathcal{U}_{\pm}}$ as the complex conjugate of  
$\mathcal{U}_{\pm}$   respectively.

  The function $\delta S(z) = S_{+}(z) - S_{-}(z)$ is in principle defined only on $\mathcal I$;  however one can  define an  analytic extension  to  the left and in the right of $\mathcal I$.
  The function $\beta^2(z)$ is assumed analytic in the neighbourhood of $\mathcal I$ and thus we can analytically extend ${r}(z)$ to the regions $\mathcal U_\pm$. We denote by  $\hat{r}(z)$ the analytic extension of ${r}(z)$. This procedure can be repeated by Schwartz symmetry also to $r^*(z)$.

\noindent
\paragraph{Second transformation: $\boldsymbol{ T^{(1)}(z,x)\to  T^{(2)}(z,x)}$.}
We  introduce a further  transformation of the problem:
\begin{equation}
  \label{eq:S_transf}
  T^{(2)}(z,x)= T^{(1)}(z,x)G(z,x),
\end{equation}
where
  \begin{equation}
    G(z,x)=\left\{
      \begin{split}
       & \begin{pmatrix}
          1 & \frac{e^{-2ix(g(z) +z)}}{\hat{r}(z)f^{2}(z)}\\
         0 & 1   
       \end{pmatrix} \text{ for } z \in \mathcal{U}_{+}; \\
       &\begin{pmatrix}
           1 &  -\frac{e^{-2ix(g(z)+z)}}{\hat{r}(z)f^{2}(z)}\\
         0 & 1
      \end{pmatrix} \text{ for } z \in \mathcal{U}_{-};\\
      &
      \begin{pmatrix}
         1 & 0\\
        - e^{2ix(g(z)+z)}\frac{f^{2}(z)}{\hat{r} ^*(z)} & 1
      \end{pmatrix} \text{ for } z \in \overline{\mathcal{U}_{+}}; \\
      &
      \begin{pmatrix}
        1 & 0\\
         e^{2ix(g(z) + z)}\frac{f^{2}(z)}{\hat{r} ^*(z)} & 1
       \end{pmatrix} \text{ for } z \in\overline{\mathcal{U}_{-}}; \\
      &\;\; \1 \qquad \text{ otherwise. }
      \end{split}\right.
  \end{equation}
  \begin{figure}[h!]
    \centering
    \includegraphics[width=0.7\textwidth]{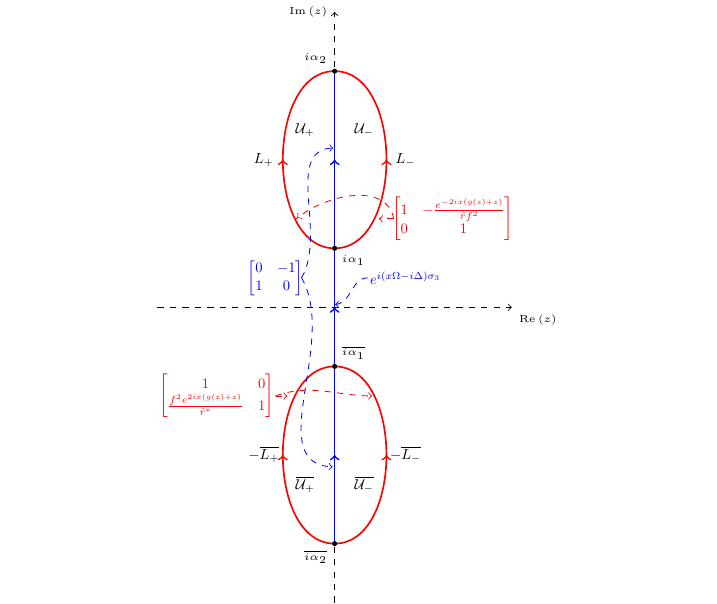}
    \caption{ The lenses $\mathcal{U}_{+}$ and $\mathcal{U}_{-}$  }\label{fig:2}
  \end{figure}
  As a consequence of the  above  transformation  we obtain  from \eqref{psi0T1} that 
\begin{equation}
\label{psi0T2}
\psi_0(x)=2i\e^{i(g_{\infty}x+\phi_\infty)}\lim_{z\to\infty} \left[zT^{(2)}_{12}(z;x)e^{ig(z)x}f(z)\right],
\end{equation}
and  the  matrix  $T^{(2)}$ satisfies  a new Riemann-Hilbert Problem with discontinuities  not only on  $\mathcal{I} \cup \overline{\mathcal{I}}\cup[-i\alpha_1,i\alpha_1]$ but also on the boundaries of the lenses $\mathcal{U}_{\pm}$ and $\overline{\mathcal{U}_{\pm}}$, which we denote with $L_{\pm}$ and $\overline{L_{\pm}}$, namely 
    \begin{equation}
      \label{eq:fin-rhp}
      \begin{split}
      &T^{(2)}_{+}(z;x)=T^{(2)}_{-}(z;x)V^{(2)}(z;x),\\
      &T^{(2)}(z;x) = \1+O(z^{-1}),\quad \mbox{as $z\to\infty$},
      \end{split}
    \end{equation}
  where the matrix  $V^{(2)}$   takes the following form: 
    \begin{equation*}
      V^{(2)}(z;x)=\left\{
        \begin{array}{ll}
          \begin{pmatrix}
            0 & 1\\
            -1 & 0
          \end{pmatrix} &\text{ for } z \in \mathcal I\cup\mathcal I^*\\[15pt]
          \begin{pmatrix}
             e^{ix\Omega + \Delta} & 0\\
     0 & e^{-ix\Omega -\Delta}
          \end{pmatrix}& \text{ for } z \in [-i\alpha_1;i\alpha_1]\\
        \end{array}\right.
      \end{equation*}
      while on  the  contours $L_\pm$ and $\overline{L_\pm}$ 
 \begin{equation*}
       V^{(2)}(z;x) =\left\{
      \begin{array}{ll}
        \begin{pmatrix}
          1 & -\frac{ e^{-2ix(g(z) +z)}}{\hat{r}(z)f^{2}(z)}\\
       0 & 1   
     \end{pmatrix} &\text{ for } z \in L_{+}\cup L_{-}, \\[15pt]
      \begin{pmatrix}
        1 & 0\\
         e^{2ix(g(z) + z)}\frac{f^{2}(z)}{\hat{r}^{*}(z)} & 1
       \end{pmatrix} &\text{ for }z \in \overline{L_{+}}\cup  \overline{L_{-}};\\
      \end{array}\right.
  \end{equation*}
 so that the $z$--dependent oscillatory part appears   only along the outer boundary of the lenses.  

 Now we need to study the sign of $\im(g(z) + z)$ around the lenses.

 \begin{lemma}
   The function $g(z)$ satisfy the following inequalities:
   \begin{equation}
     \label{eq:ineq}
     \begin{split}
       &\im(g(z) + z) > 0 \text{ for } z \in\C_+\setminus \{i\alpha_{1};i\alpha_{2}\} \\
       &\im(g(z) + z) < 0 \text{ for } z \in \C_-\setminus \{-i\alpha_2;-i\alpha_1\}
     \end{split}
   \end{equation}
 \end{lemma}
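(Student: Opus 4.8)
The plan is to establish the sign of $\im(g(z)+z)$ by studying the harmonic function $h(z):=\im(g(z)+z)$ directly through the boundary values collected in Lemma \ref{propertygf}, together with the maximum principle. First I would observe that from \eqref{eq:g_funct} the function $g(z)+z = \int_{i\alpha_2}^z \frac{\zeta^2+\kappa}{\sqrt{R(\zeta)}}\,d\zeta$ is analytic in $\C_+ \setminus \mathcal{I}$ and, by the Schwartz symmetry $\overline{g(\bar z)}=g(z)$, it suffices to treat the upper half-plane and then reflect: indeed $h(\bar z) = \im(g(\bar z)+\bar z) = \im\overline{(g(z)+z)} = -h(z)$, so the second inequality in \eqref{eq:ineq} follows immediately from the first.

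Next I would identify $h$ as a harmonic function on the open region $\C_+ \setminus \mathcal{I}$ and analyze its boundary behavior. On the real axis, $h$ vanishes because $g(z)+z$ is real there (this follows from the Schwartz symmetry, since for real $z$ we get $\overline{g(z)+z}=g(z)+z$). On the two sides of the slit $\mathcal{I}=[i\alpha_1,i\alpha_2]$ the relation $g_+(z)+g_-(z)=-2z$ from \eqref{eq:g_R-H} gives $\re(g_\pm(z)+z)=0$, i.e. $g_\pm(z)+z$ is purely imaginary on $\mathcal{I}$; one then checks the \emph{sign} of this imaginary part using the explicit integrand. On the segment $[-i\alpha_1,i\alpha_1]$ the jump is the real constant $\Omega$, so $h$ extends continuously across it (the imaginary part is single-valued there). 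Finally, near the branch points $\pm i\alpha_j$ the estimates in item (3) of Lemma \ref{propertygf} show $h$ stays bounded (it vanishes like a square root at $i\alpha_2$ and approaches a value determined by $\Omega/2$ at $i\alpha_1$, but $\Omega\in\R$ so that contribution to $\im$ is controlled). At $z\to\infty$, \eqref{eq:cond-g} gives $g(z)+z = g_\infty + z + O(1/z)$ — wait, more carefully $g(z) = g_\infty + O(1/z)$ so $g(z)+z = z + g_\infty + O(1/z)$, hence $h(z)=\im(z)+o(1)\to+\infty$ as $z\to\infty$ in $\C_+$; in particular $h$ is positive near infinity.

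The core of the argument is then a maximum-principle/argument-principle dichotomy. Since $h$ is harmonic in $\C_+\setminus\mathcal{I}$, bounded below, vanishes on $\R$, blows up to $+\infty$ at infinity, and extends continuously to the slit, its infimum on the closure is attained either on $\R$ (where it is $0$) or on the slit $\mathcal{I}$. So the only way the claimed strict inequality could fail is if $h$ takes a negative or zero value on $\mathcal{I}$ or on $\R\setminus\{0\}$ in its interior — but by the minimum principle a harmonic function cannot attain an interior minimum, and on $\R$ its value is exactly $0$. Thus it remains to show $h>0$ on the (relative) interior of $\mathcal{I}$, after which the minimum principle forces $h>0$ throughout the open region. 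Concretely I would parametrize $\mathcal{I}$ by $z=is$, $\alpha_1<s<\alpha_2$, write $g_+(is)+is = i\,\im(g_+(is)+is)$ via the symmetry above, and compute the sign of $\im(g_+(is)+is)$ by integrating the explicit one-form from $i\alpha_2$: on $\mathcal{I}$ the radical $\sqrt{R(\zeta)}_+$ and the numerator $\zeta^2+\kappa$ have definite reality/sign properties (with $\kappa$ real and $-\alpha_2^2<\kappa<-\alpha_1^2$ needing to be checked from \eqref{eq:kappa}), and a monotonicity argument in $s$ — using $g_+(i\alpha_2)+i\alpha_2=0$ as the anchor and the fact that $h$ must reconnect continuously to the positive behavior near $i\alpha_1$ and along $\R$ — pins the sign to be positive.

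The main obstacle I expect is precisely this last computation: verifying the strict positivity of $h$ on the open slit $\mathcal{I}$ and near the branch point $i\alpha_1$. The subtlety is that $g(z)+z$ is not simply a single explicit elliptic function with obvious sign; one has to track the determination of $\sqrt{R(\zeta)}$ along the integration path and combine the three conditions in \eqref{eq:g_R-H} consistently — in particular, the nonzero jump $\Omega$ across $[-i\alpha_1,i\alpha_1]$ means $g$ itself is multivalued there, and one must make sure the \emph{imaginary part} $h$ is genuinely single-valued and continuous on the region where the maximum principle is applied. A clean way to handle this is to note that $\Omega\in\R$ (stated in \eqref{eq:Omega} and re-derived via Riemann bilinear relations), so $e^{2\pi i \cdot (\text{period})}$ considerations show $h$ is well-defined; once single-valuedness and the boundary data are nailed down, the harmonicity plus minimum principle does the rest, and by Schwartz reflection we obtain the companion inequality in $\C_-$.
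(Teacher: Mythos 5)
Your overall strategy is the same as the paper's: observe that $h(z):=\im(g(z)+z)=\im\int_{i\alpha_2}^z\frac{\zeta^2+\kappa}{\sqrt{R(\zeta)}}\,d\zeta$ is a single-valued harmonic function off the slits because the period $\Omega$ (and the $\alpha$-period, by the choice of $\kappa$) is real, that $h\equiv 0$ on $\R$, that $h\sim\im z\to+\infty$ at infinity, and then invoke the extremum principle together with Schwartz reflection for the lower half-plane. That skeleton is correct and is exactly what the paper does.

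However, there is a genuine error in your boundary analysis on the slit $\mathcal I$, and it is precisely the step that makes the argument close. From $g_+(z)+g_-(z)=-2z$ you may only conclude $(g_+(z)+z)=-(g_-(z)+z)$; this says the two boundary values are negatives of each other and implies nothing about $\re(g_\pm(z)+z)$ vanishing. The correct statement, obtained by direct computation (write $\zeta=is$ with $\alpha_1<s<\alpha_2$: then $\zeta^2+\kappa=\kappa-s^2$ is real, $R(\zeta)=(\alpha_1^2-s^2)(\alpha_2^2-s^2)<0$ so $\sqrt{R(\zeta)}_\pm$ is purely imaginary, and $d\zeta=i\,ds$, so the one-form $\frac{\zeta^2+\kappa}{\sqrt{R(\zeta)}_\pm}\,d\zeta$ is real), is that $g_\pm(z)+z$ is purely \emph{real} on $\mathcal I$, i.e.\ $h\equiv 0$ on both sides of the slit. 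Consequently your concluding program --- ``pin the sign of $\im(g_+(is)+is)$ to be positive on the open slit by a monotonicity argument'' --- is aimed at a false statement and cannot be completed. The repair is immediate: $h$ is harmonic on $\C_+\setminus\mathcal I$, continuous up to the full boundary $\R\cup\mathcal I_+\cup\mathcal I_-$ where it vanishes identically, and tends to $+\infty$ at infinity, so the strict minimum principle gives $h>0$ on the open set $\C_+\setminus\mathcal I$ (which is all that is needed later, since the lens boundaries avoid $\mathcal I$); note in passing that this also shows the inequality cannot hold \emph{on} the slit itself, where $h=0$.
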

 \begin{proof}
The expression we are studying is (see \eqref{eq:g_funct})
\be
\Phi(z):= \im \int_{i\alpha_{2}}^{z}\frac{\zeta^{2} + \kappa}{\sqrt{R(\zeta)}} d \zeta
\ee
with $\kappa$ defined in \eqref{eq:kappa}.
One verifies that, while the integral is not single--valued on $\C \setminus \mathcal I \cup \overline{\I}$, its periods are purely real thanks to the definition of $\kappa$. Therefore $\Phi$ is a harmonic function on $\C \setminus \mathcal I \cup \overline{\I}$.  Next, one observes that $\Phi(z)$ is zero on $\mathcal I$ because both boundary values of the integral are real and in particular $\Phi$ is continuous across $\mathcal I\cup \overline{\I}$ (but not differentiable). Moreover for $z\in \R$ we also have $\Phi(z)\equiv 0$ because the integral is purely real. 
Finally, since the integrand is $1 + \mathcal O(\zeta^{-2}))$, the integral behaves as $z+\mathcal O(1)$ as $z\to\infty$ and hence $\Phi(z)$ has the same sign$\im z$ for large $z$.
By the extremum principle of harmonic functions we deduce that $\Phi$ is strictly positive in $\C_+\setminus \mathcal I$ and negative in $\C_-\setminus \overline{\I}$.
\end{proof}

Let $U_j$ be a neighbourhood of $i\alpha_j$,  and similarly $\overline{U_j}$ be a neighbourhood of  $-i\alpha_j$  with $j=1,2$. Let
 $ \Gamma=L_{+}\cup L_-\cup \overline{L_+}\cup   \overline{L_-}$.
  Lemma~\ref{eq:ineq}, implies that for $x \to -\infty$ the jump matrix $V^{(2)}(z;x)$   converges to the identity exponentially fast 
   for $z \in \hat{\Gamma}$ where $ \hat{\Gamma}=\Gamma\backslash\{\Gamma\cap\{U_1\cup U_2\cup \overline{U_1}\cup \overline{U_2}\}\}$.
   We arrive to the model problem for a matrix $X(z;x)$.
\begin{problem}
  \label{prob:mod-prob}
 Find a   matrix  $X(z): \C \to GL(2,\C)$, analytic and analytical invertible in $z\in \C \setminus [-i\alpha_2,i\alpha_2]$, with jump conditions 
 \begin{equation}
   \label{eq:inft_R-H}
   \begin{split}
     & X_{+}(z;x)= X_{-}(z;x)V_{X}(z;x)\\
     &V_{X}(z;x)=\left\{
        \begin{matrix}
          \begin{pmatrix}
            0 & 1\\
            -1 & 0
          \end{pmatrix} &\text{ for } z \in \I \cup \overline{\I}\\
          \begin{pmatrix}
            e^{ix\Omega + \Delta} & 0\\
     0 & e^{-ix\Omega -\Delta}
          \end{pmatrix} &\text{ for } z \in [-i\alpha_1;i\alpha_1]
        \end{matrix}\right.
   \end{split}
 \end{equation}
 with boundary condition at infinity
 \begin{equation*}
   X(z;x)= \1 + \mathcal{O}(z^{-1}), \text{ as } z \to \infty.
 \end{equation*}
\end{problem}
    The RHP~\ref{prob:mod-prob} is solved using   the Jacobi theta function \eqref{eq:Jac-theta} (see e.g.  \cite{DKZ}, \cite{Koroktin}, \cite{DKMVZ}, \cite{BT2017}).
We define the {\it Abel map} $u: \mathcal{C} \to \C$ as
    \begin{equation}
      \label{eq:abel}
  u(z,z_0)=\frac{i(\alpha_1+\alpha_{2})}{4 K(m)}\int^{(z,w)}_{(z_0,w_0)}\frac{d \zeta}{\sqrt{R(\zeta)}},
\end{equation}
where each point $(z,w)$ and $(z_0,w_0)$  belong to  the elliptic curve $\mathcal{C}$ in \eqref{C}.
By setting $(z_0,w_0)=\infty^1$, the Abel map~\eqref{eq:abel} has the following jump conditions along the segment $[-i\alpha_2,i\alpha_2]$
\begin{align}
  \label{eq:jump-u}
  &u_{+}(z;\infty^1) + u_{-}(z;\infty^1) = -\frac{1}{2} \quad  \text{ for } z \in \mathcal{I}; \nn\\
  &u_{+}(z;\infty^1) - u_{-}(z;\infty^1) = \tau \quad  \text{ for } z \in [-i\alpha_1;i\alpha_1];\\
   &u_{+}(z;\infty^1) + u_{-}(z;\infty^1) = \frac{1}{2} \quad  \text{ for } z \in \overline{\mathcal{I}}. \nn
\end{align}
Here   and below the point $(z,w)$ is simply referred to as $z$ and it is   understood to belong to the first sheet of the  surface $\mathcal{C}$.
Next we recall that the Jacobi theta function defined in \eqref{eq:Jac-theta} satisfies the periodicity conditions 
  \begin{equation}
  \label{period_theta}
\vartheta(z +j+ l\tau;\tau)= \vartheta(z;\tau)\e^{-2i\pi z l - i \pi l^2 \tau},\quad \text{for}  \;l,j \in \mathbb{Z}\,.
\end{equation}
 Using the  above property and the jump conditions  of $u(z,\infty^1)$~\eqref{eq:jump-u}, we can show that the solution of the model problem~\ref{prob:mod-prob} has the following form:
    \begin{equation}
      X(z;x)=
      \frac{\vartheta(0;\tau)}{2\vartheta(\epsilon;\tau)}\begin{pmatrix}
        \frac{\vartheta(u(z^{(1)},\infty^1) -\epsilon;\tau)}{\vartheta(u(z^{(1)},\infty^1);\tau)}\left(\phi(z) +\phi(z)^{-1}\right) &-i\frac{\vartheta(u(z^{(2)},\infty^1) -\epsilon;\tau)}{\vartheta(u(z^{(2)},\infty^1);\tau)}\left(\phi(z) -\phi(z)^{-1}\right)\\
        i\frac{\vartheta(u(z^{(1)},\infty^2) -\epsilon;\tau)}{\vartheta(u(z^{(1)},\infty^2)\tau)}\left(\phi(z) -\phi(z)^{-1}\right)&\frac{\vartheta(u(z^{(2)},\infty^2) -\epsilon;\tau)}{\vartheta(u(z^{(2)},\infty^2);\tau)}\left(\phi(z) +\phi(z)^{-1}\right)
      \end{pmatrix}
    \end{equation}
    where
    \begin{equation*}
        \phi(z)= \left(\frac{(z+i\alpha_{1})(z -i\alpha_{2})}{(z+i\alpha_2)(z -i\alpha_{1})}\right)^{\frac{1}{4}}\qquad \epsilon= \frac{x\Omega -i\Delta}{2\pi}
      \end{equation*}
and we referred with $z^{(1)},z^{(2)}$ the point $(z,\pm w)\in \mathcal{C}$ in the first and second sheet.
   
 {\color{black}Such solution is well defined. Indeed, the Jacobi elliptic function $\vartheta(z;\tau)$ has only one zero located in $z =\frac{1}{2}+\frac{\tau}{2}$ . Since  $\Omega$ is real  while $\Delta$ is imaginary, it follows that $\vartheta(\epsilon;\tau)\neq 0$ for all $x\in\R$.}
{\color{black} Furthermore, as  it is explained in \cite{DKZ},the ratio  $\frac{\phi(z) +\phi(z)^{-1}}{\vartheta(u(z^{(1)},\infty^1);\tau)}$ in the $11$ entry does not have poles but only fourth root singularities at the points $\pm i\alpha_1$ and $\pm i\alpha_2$. }The same considerations apply to the other  entries of the matrix $X(z;x)$.

{\color{black} Since $\phi(z) -\phi(z)^{-1}\to 0$ and $u(z^{(2)},\infty^1) \to -\frac{1}{2}$ as  $z \to \infty$, it is immediate to verify that $X(z)\to \1 + \mathcal{O}(z^{-1})$ as $z \to \infty$.}

\vskip 0.2cm
\noindent
{\bf The error parametrix near  the endpoints $\pm i\alpha_1,\pm i\alpha_2$.} 
The last step of the nonlinear steepest descent analysis is the definition and study of the error matrices around the end points of the segments $\mathcal{I}$ and $\overline{\mathcal{I}}$. Before analyzing the error parametrix, we should consider some assumptions about the behaviours of ${r} (z)$ near the points $\pm i\alpha_1, \pm i \alpha_2$. Indeed, Girotti et al. in~\cite{GGJMM} studied a RHP similar to~\eqref{eq:ell_z} for the mKdV and they proved that if the function ${r} (z)$ has a local behaviour near the end points $\pm i\alpha_1, \pm i \alpha_2$ of the form ${r} (z) \sim |z \pm i\alpha_j|^{\pm 1/2}Q(z)$ for $j=1,2$, with $Q(z)$ an analytic function locally bounded and non-zero in a neighbourhood of the end points, then it is possible to modify the lenses of the opening factorization so that the error matrices tends to the identity exponentially fast uniformly in $z\in \C$.   Specifically, we consider the following assumption.
 \begin{assumption}
     \label{ass:ass_1}
      Let $h>0$ and let us consider the open set   $\mathscr{U}_{h,+}$         defined as
     \begin{align}
       &\mathscr{U}_{h,+}:=\left\{z\in \mathbb{C} | \re(z)\in \left( 0\right. , \left. h\right]\right. \nn\\
       & \qquad \left.\text{ and } \alpha_1-\sqrt{h^2 -\re(z)^2}\le \im(z) \le \alpha_2 + \sqrt{h^2 - \re(z)^2} \right\},
     \end{align}
     with some $0 < h <\alpha_1$. The  open set $\mathscr{U}_{h,-}$ is defined by symmetry, $\mathscr{U}_{h,-}= \{z| - \ov{z} \in \mathscr{U}_{h,+}\}$. 

 When  ${r} (z)|z-i\alpha_j|^{\pm 1/2}$ is bounded and non zero on $\mathcal{I}$, we assume that ${r} (z)$ admits an analytical continuation to $\mathscr{U}_{h,-}\cup\mathscr{U}_{h,+}$
     \begin{align}
       &\hat{r}(z) \text{ analytic in } \mathscr{U}_{h,-}\cup\mathscr{U}_{h,+}  & \hat{r}(z)|_{z\in\mathcal{I}}={r} (z), & \\
       & \hat{r}_{+}(z) +\hat{r}_{-}(z)=0   & z \in [i\alpha_2,i (\alpha_2+ h)] \cup [i(\alpha_1 - h), i\alpha_1]. & 
     \end{align}
       \end{assumption}
   In our case, we have that ${r} (z)= \delta S(z)\beta^2(z)$, with $\beta^2(z)$ bounded in the original domain $\mathcal{D}$ and $\delta S(z)$ defined in $\mathcal{I}$ with behaviour at the end point of the form $\delta S(z) \sim |z-i\alpha_j|^{1/2}$. So, we are in the main hypothesis of the Assumption~\ref{ass:ass_1} and we consider $\mathscr{U}_{h,+}$ and $\mathscr{U}_{h,-}$ as the new lenses (see Figure~\ref{fig:3}).

   From this assumption, after we apply the transformation~\eqref{eq:S_transf}, we have jumps in the segments $[i\alpha_2,i(\alpha_2 + h)]$ and $[i(\alpha_1 -h),i\alpha_1]$    \begin{itemize}
   \item $z\in [i\alpha_2,i(\alpha_2 + h)]$
     \begin{equation}
       (T_{-}^{(2)}(z))^{-1}T_{+}^{(2)}(z)=
       \begin{pmatrix}
         1 & (\hat{r}_{+}(z)^{-1} +\hat{r}_{-}(z)^{-1})\frac{\e^{-2i(g(z) +z)}}{(f(z))^{2}}\\
           0 & 1
       \end{pmatrix}= {\bf 1};
     \end{equation}
   \item $z \in [i(\alpha_1 -h),i\alpha_1]$
        \begin{equation}
       (T_{-}^{(2)}(z))^{-1}T_{+}^{(2)}(z)=
       \begin{pmatrix}
         \e^{ix\Omega + \Delta} & (\hat{r}_{+}(z)^{-1} +\hat{r}_{-}(z)^{-1})\frac{\e^{-2i(g(z) +z)}}{(f(z))^{2}}\\
           0 & \e^{-ix\Omega -\Delta}
       \end{pmatrix}= \e^{(ix\Omega +\Delta)\sigma_3}.
     \end{equation}
   \end{itemize}
   
   This means that  we can   detach the lenses  from the points $i\alpha_1$ and $i\alpha_2$ so that they  full enclose the interval $\mathcal{I}$. The same applies to $\hat{r}^*(z)$ and with the complex conjugate $\overline{\mathscr{U}_{h+}}$ and $\overline{\mathscr{U}_{h-}}$ that enclose $\overline{\mathcal{I}}$.
   \begin{figure}[h!]
     \centering
       \includegraphics[width=0.7\textwidth]{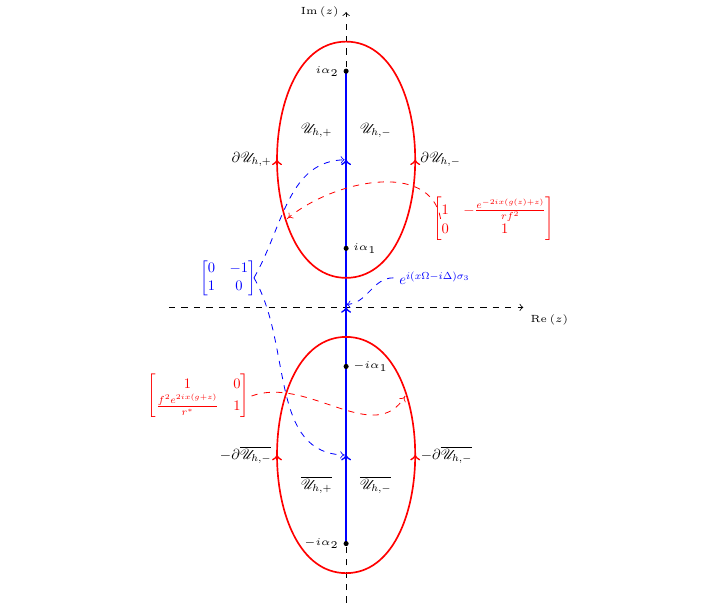}
     \caption{On the left: the modified new lenses $\mathscr{U}_{h,\pm}$. On the right: How the RHP~\eqref{eq:fin-rhp} change with the new lenses.}\label{fig:3}
   \end{figure}

   We now define the Error matrix
   \begin{equation}
     \mathcal{E}(z):= T^{(2)}(z)(X(z))^{-1},
   \end{equation}
   which is analytic for $z \in \C \setminus \pa(\mathscr{U}_{h,+} \cup \mathscr{U}_{h,-} \cup \overline{\mathscr{U}_{h,+}}\cup \overline{\mathscr{U}_{h,-}} )$ and it has the jump condition
   \begin{align}
     \mathcal{E}_{+}(z)= \mathcal{E}_{-}(z)\left\{
     \begin{matrix}
       \1 +\frac{ e^{-2ix(g(z) +z)}}{\hat{r}(z)f^{2}(z)} X(z)\sigma_+(X(z))^{-1} & \text{for } z \in \pa\mathscr{U}_{h+};\\
       \1 +\frac{ e^{-2ix(g(z) +z)}}{\hat{r}(z)f^{2}(z)} X(z)\sigma_+(X(z))^{-1} & \text{for } z \in \pa\mathscr{U}_{h-};\\
       \1 + \frac{ e^{2ix(g(z) +z)}f^{2}(z)}{\hat{r}^*(z)} X(z)\sigma_-(X(z))^{-1} & \text{for } z \in \pa\overline{\mathscr{U}_{h+}} ; \\
       \1 + \frac{ e^{2ix(g(z) +z)}f^{2}(z)}{\hat{r}^*(z)} X(z)\sigma_-(X(z))^{-1} & \text{for } z \in \pa\overline{\mathscr{U}_{h-}};
     \end{matrix}\right. 
   \end{align}
where the matrix $\sigma_+=\begin{pmatrix}0&1\\0&0\end{pmatrix}$ and  $\sigma_-=\begin{pmatrix}0&0\\1&0\end{pmatrix}$.
   Since $X(z)$ is bounded in $x$, the jump matrices tends to the identity exponentially fast with respect to the matrix norm.
%
   
   From the small norm lemma (see e.g. \cite{DKMVZ}), {we have that the matrix $\mathcal{E}$, uniformly in $ z \in \C$, tends exponentially to the identity as $x \to -\infty$}, i.e.
   \begin{equation}
   \label{E_exp}
     \mathcal{E}(z)= \1 + \mathcal{O}(\e^{c_-x}), \text{ as } x\to -\infty,
   \end{equation}
   with $c_- > 0$.
It follows that the model problem $X(z;x)$  coincides with   $T^{(2)}(z;x)$   up to an exponentially small error as $x\to-\infty$.
\vskip 0.2cm
\noindent
   {\bf Asymptotic  behaviour  of $\psi_0(x)$ as  $x \to \pm \infty$ and proof of Theorem~\ref{thmPsi0}}\\
   For $x \to +\infty$, the jump matrices of the RHP~\eqref{eq:ell_z} tend to the identity matrix exponentially fast, i.e. $(T(z;x))_{12} \sim \e^{-c_+ x}$, with $c_+ > 0$. From the equation~\eqref{Nsoliton} we have that
  \begin{equation*}
    \psi_{0}(x)= \mathcal{O}(\e^{-c_+ x}).
  \end{equation*}

  We now focus on the asymptotic behaviour for  $x \to -\infty$.
From the knowledge of $X(z;x)$ (exponentially close to the model problem $T^{(2)}(z)$)   we can find out the asymptotic behaviour of the initial datum $\psi_{0}(x)$ for the NLS equation.
  Indeed from \eqref{psi0T2}  and \eqref{E_exp}  we have 
\begin{equation}
\label{final_exp}
\begin{split}
\psi_0(x)&=2i\e^{i(g_{\infty}x+\phi_\infty)}\lim_{z\to\infty} \left[zT^{(2)}_{12}(z;x)e^{ig(z)x}f(z)\right]\\
&=2i\e^{i(g_{\infty}x+\phi_\infty)}\lim_{z\to\infty} \left[z(\mathcal{E}(z;x,t)X(z;x))_{12}e^{ig(z)x}f(z)\right]\\
&=2i\e^{i(g_{\infty}x+\phi_\infty)}\lim_{z\to\infty} \left[zX_{12}(z;x)e^{ig(z)x}f(z)\right] +\mathcal{O}(\e^{c_- x}), \text{ for } x \to -\infty\\
\end{split}
\end{equation}
From   \eqref{eq:g_R-H}   we obtain  as $z\to\infty$
   \begin{equation*}
     \e^{ig(z)x} =  \e^{ig_{\infty}x} (1 + \mathcal{O}\left(z^{-1}\right) ),\quad 
   \end{equation*}
     while for $f(z)$:
   \begin{equation*}
     f(z)= e^{i\phi_{\infty}}+ \mathcal{O}\left(z^{-1}\right).
   \end{equation*}
Next we need to expand $X_{12}(z)$:
  \begin{equation*}
    \begin{split}
      X_{12}(z)&=-i\frac{\vartheta(0)\vartheta(u(z^{(2)},\infty_1) -\epsilon)}{2\vartheta(\epsilon)\vartheta(u(z^{(2)},\infty_1))}\left(\phi(z) -\phi(z)^{-1}\right) \\
      &= -i\frac{\vartheta(0)\vartheta(\epsilon+ \frac{1}{2})}{2\vartheta(\epsilon)\vartheta(\frac{1}{2})}\left(-i\frac{(\alpha_{2}-\alpha_1)}{z} + \mathcal{O}\left(z^{-2}\right)\right)\\
      &= -\frac{\vartheta(0)\vartheta(\epsilon+\frac{1}{2})(\alpha_2 - \alpha_1)}{2\vartheta(\epsilon)\vartheta( \frac{1}{2})z} + \mathcal{O}\left(z^{-2}\right)
     \end{split}
   \end{equation*}
 where $u(\infty^2,\infty^1)=-\frac{1}{2}$ by the symmetry of the problem.  Inserting the above three expansions into \eqref{final_exp} we obtain \eqref{eq:NLS-t}.
\hfill$\square$

\begin{remark}
  The formula for the elliptic solution \eqref{eq:NLS-t} can be rewritten in terms of the Jacobi elliptic function  $\dn(z;m)$.
  Let us introduce the   Jacobi theta-functions
  \begin{align}
 & \theta_{3}(z;\tau) := \vartheta(z;\tau), \\
  & \theta_{4}(z;\tau) := \vartheta\left(z + \frac{1}{2};\tau\right)
\end{align}
  and  Jacobi elliptic functions 
  \begin{align}
  \label{dn}
  &\dn(2K(m)z;m)=\dfrac{\theta_4(0;\tau)}{\theta_3(0;\tau)}\dfrac{\theta_3(z;\tau)}{\theta_4(z;\tau)}
  \end{align}
%
  Then, using the above relation,  we can rewrite the initial datum as
\begin{gather}
	\begin{aligned} \label{q.bg1}
	\psi_{0}(x)
	&=-i\e^{2i(g_{\infty}x+\phi_\infty)} (\alpha_2 -\alpha_1) 
	\frac{ \theta_3(0; \tau )}{\theta_3(\frac{1}{2}; \tau )} 
	  \frac{\theta_3( \tfrac{1}{2} +\frac{\Omega x  - i\Delta}{2 \pi}; \tau)}{ \theta_3(\frac{\Omega x  - i\Delta}{2 \pi}; \tau)} \\
&=-i \e^{2i(g_{\infty}x+\phi_\infty)}(\alpha_2 -\alpha_1)  \dfrac{1}{\dn \left( \frac{K(m)}{\pi} ({\Omega}x-i\Delta);  m \right)}	\\
	&=-i \e^{2i(g_{\infty}x+\phi_\infty)}\frac{\alpha_2 -\alpha_1}{\sqrt{1-m}	}	\dn \left( \frac{K(m)}{\pi} ({\Omega}x-i\Delta +\pi);  m \right)\\	 &=-i \e^{2i(g_{\infty}x+\phi_\infty)}(\alpha_2 +\alpha_1) 
		\dn \left(  (\alpha_2 +\alpha_1)  ( x -x_0); m \right),	
	\end{aligned}
	\end{gather}
%
%
where  in the second  relation we used the identity $\dn(u+K(m);m)=\dfrac{\sqrt{1-m}}{\dn(u;m)}$ and in the last relation we use the parity of $\dn(-u;m)=\dn(u;m)$ and plug in the explicit values of $\Omega$ and $\Delta$ as in \eqref{eq:Omega} and \eqref{eq:delta} and  $$x_0=\frac{1}{2\pi}\left[\int_{i\alpha_1}^{i\alpha_2}\frac{\log {r} (\zeta)}{\sqrt{R_{+}(\zeta)}}d \zeta - \int_{-i\alpha_2}^{-i\alpha_1}\frac{\log {r} ^{*}(\zeta)}{\sqrt{R_{+}(\zeta)}}d \zeta \right]-\frac{K(m)}{\alpha_1+\alpha_2}$$
In the limit $m\to 1$,  or $\alpha_2\to\alpha_1$,  we have  that $\dn(u;m)\sim \mbox{sech}(u)$ and $K(m)\sim\frac{1}{2}\log\frac{8}{(1-m)}$ so that  $\psi_{0}(x)$ tends to the initial datum of the one soliton solution~\eqref{one_soliton}, with $x_0 \to -\infty$.
In the limit $\alpha_1\to0$ we have $m\to 0$ and the elliptic function $\dn(z,m)\to 1$  so that the NLS initial datum $\psi_0(x)$ is a plane wave at $x\to-\infty$.
\end{remark}

\begin{remark}
The initial data described by Theorem~\ref{thmPsi0}  is step-like oscillatory. The  long time asymptotic behaviour of the solution of  the  NLS  equation   is being considered in \cite{GMO} and it is inspired by  the asymptotic analysis performed  for step-like plane wave initial data  considered in  \cite{BiondiniMan}, \cite{BoutetdeMonvel2021},\cite{Monvel2022a}
\end{remark}
\appendix

\noindent {\bf Acknowledgements}

\noindent
 TG  acknowledge the support of PRIN 2022 (2022TEB52W)  "The charm of integrability: from nonlinear waves to random matrices"-– Next Generation EU grant – PNRR Investimento M.4C.2.1.1 - CUP: G53D23001880006; the GNFM-INDAM group and the  research project Mathematical Methods in NonLinear Physics (MMNLP), Gruppo 4-Fisica Teorica of INFN.  
 The work of MB was supported in part by the Natural Sciences and Engineering Research Council of Canada (NSERC) grant RGPIN-2023-04747.
 
 \noindent
 GO acknowledge the support of FNRS Research Project T.0028.23.
 
\noindent 
The authors would like to thank the Isaac Newton Institute for Mathematical Sciences, Cambridge, for support and hospitality during the programme "Emergent phenomena in nonlinear dispersive waves", where work on this paper was undertaken. This work was supported by EPSRC grant EP/R014604/1.

\appendix
\section{Derivation of \eqref{eq:tau-N}}
\label{IIKStau}
According to the general IIKS framework, the  jump of  RHP \eqref{eq:new_R-H0}
can be written in the form 
\[
J(z)=\1+2\pi i p(z)q^T(z)
\]
with 
\begin{align}
p(z)&=\begin{pmatrix}
 {\bf 1}_{\gamma_-}(z)\\
{\bf 1}_{\gamma_+}(z)\end{pmatrix},
\;\;q(z)=\begin{pmatrix}
-A (z) {\bf 1}_{\gamma_+}(z)
\nn \\
A^\star(z){\bf 1}_{\gamma_-}(z)\end{pmatrix},
\\
A(z) &:=  \sum_{j=1}^N \frac{c_j {\rm e}^{2\theta(z)}}{2i\pi(z-z_j)} ,\quad A^\star (z) :=  \sum_{j=1}^N \frac{\ov{c_j} {\rm e}^{-2\theta(z)}}{2i\pi(z-\ov{z_j})} \,.
\nn
\end{align}
The corresponding integrable operator   $\mathcal{T}$ associated to RHP \eqref{eq:new_R-H0}  acts on $L^2(\gamma_+\cup \gamma_-) \simeq L^2(\gamma_+) \oplus L^2(\gamma_-)$
and is defined by  the  kernel
\bea
  \hat{\mathcal{T}}(z,w):&= \frac{p^T(z) \, q(w)}{z - w}\\
  &= -\frac {A(w) {\bf 1}_{\gamma_+}(w) {\bf 1}_{\gamma_-}(z) - A^\star(w) {\bf 1}_{\gamma_-}(w) {\bf 1}_{\gamma_+}(z)}{z-w}.\nn
\eea
The corresponding $\tau$ function is 
\[
\tau=\det(\id_{L^2(\gamma_+\cup \gamma_-) }-\mathcal{T}).
\]
%
%
In terms of the splitting $L^2(\gamma_+\cup \gamma_-) \simeq L^2(\gamma_+) \oplus L^2(\gamma_-)$ the Fredholm determinant to compute is 
\be
\tau = \det \left[
\begin{array}{cc}
\id_{L^2(\gamma_+)}  & \mathcal{T}_{-+}\\
\mathcal{T}_{+-}  &\id_{ L^2(\gamma_-)}
\end{array}
\right]
\ee
where $\mathcal{T}_{-+} :L^2(\gamma_+) \to L^2(\gamma_-)$ and $\mathcal{T}_{+-} :L^2(\gamma_-) \to L^2(\gamma_+)$ are given by 
\be
\mathcal{T}_{-+} [\phi](z) = {\bf 1}_{\gamma_-}(z)\oint_{\gamma_+} \frac {\phi(w) A(w) d w }{z-w}
\nn
\\
\mathcal{T}_{+-} [\psi](z) = {\bf 1}_{\gamma_+}(z)\oint_{\gamma_-} \frac {\psi(w) A^\star(w) d w }{w-z}
\ee
By the standard determinantal identities we have 
\be
\label{eq:tau-app}
\tau = \text{\rm det}_{L^{2}(\gamma_{+})}\left[\id - \mathcal{T}_{+-}\circ \mathcal{T}_{-+}\right]
\ee
One can see that the operator is of finite rank. Indeed
\be
\mathcal{T}_{+-}\circ \mathcal{T}_{-+}[\phi](z) =\oint_{\gamma_-} \frac { A^\star(s) d s }{s-z} \oint_{\gamma_+} \frac {\phi(w) A(w) d w }{s-w}
\ee
and the $s$--integration results in a residue evaluation and the poles of $A^\star$ at $\ov z_j$'s:
\bea
\mathcal{T}_{+-}\circ \mathcal{T}_{-+}[\phi](z)&= \sum_{j=1}^N \frac {\ov {c_j} {\rm e}^{-2\theta(\ov z_j)}}{\ov z_j-z}  \oint_{\gamma_+} \frac {\phi(w) A(w) \d w }{\ov{z_j}-w}\nn\\
&= \sum_{j=1}^{N} \widetilde{A}(v_{j},\phi) v_{j}(z),
\eea
where $v_{k}(z) \in L^{2}(\gamma_{+})$, $\widetilde{A}$ is a bilinear form in $L^{2}(\gamma_{+})$ and they are defined as
\bea
&\widetilde{A}(\varphi,\psi):= \oint_{\gamma_{+}} \varphi(w)A(w)\psi(w)\d w &\varphi(z),\psi(z) \in L^{2}(\gamma_{+}); \\
&v_{k}(z):= \frac{\sqrt{\ov c_{k}}\e^{-\theta(z_{k})}}{\ov{z_{k}} - z} & \text{ for } k= 1;
,\dots, N.
\eea
Therefore the determinant~\eqref{eq:tau-app} becomes a finite dimensional Fredholm determinant of the Gram matrix, namely
\bea
 \widetilde{A}(v_{j},v_{k})&= \oint_{\gamma_+}  \frac {\sqrt{\ov {c_j}\ov{c_k}} {\rm e}^{-\theta(\ov z_j)-\theta(\ov z_k)}}{\ov z_j-z}
 \frac{A(z)}{\ov z_k-z}  d z \nn\\
 &=\sum_{\ell=1}^N   \frac{\sqrt{\ov {c_j}\ov{c_k}} {\rm e}^{-\theta(\ov z_j)-\theta(\ov z_k)}}{\ov z_j-z_\ell} \frac {{c_\ell} {\rm e}^{2\theta(z_\ell )}}{ \ov{z_k}-z_\ell}\nn \\
& = -\big(\ov{\Phi_N} \Phi_N\big)_{jk}
\eea
where $\Phi_{N}(x,t)$ is defined in~\eqref{Phi}.

Then, up to a conjugation, we have rewritten the Fredholm determinant~\eqref{eq:tau-N}.

\bibliography{Solitonellipse}{}

\begin{thebibliography}{10}

\bibitem{AblowitzClarkson}
{\sc M.~J. Ablowitz and P.~A. Clarkson}, {\em Solitons, nonlinear evolution
  equations and inverse scattering}, pp.~<span class="rm">xii</span>+516 pp.

\bibitem{P}
{\sc W.~Bauhardt and C.~P\"oppe}, {\em The {Z}akharov-{S}habat inverse spectral
  problem for operators}, J. Math. Phys., 34 (1993), pp.~3073--3086.

\bibitem{BC1}
{\sc R.~Beals and R.~R. Coifman}, {\em Multidimensional inverse scatterings and
  nonlinear partial differential equations}, p.~45?70.

\bibitem{BC2}
{\sc R.~Beals and R.~R. Coifman}, {\em Linear spectral problems, nonlinear
  equations and the $\overline\partial$-method}, Inverse Problems, 5 (1989),
  p.~87?130.

\bibitem{BGO}
{\sc M.~Bertola, T.~Grava, and G.~Orsatti}, {\em {Soliton Shielding of the
  Focusing Nonlinear Schr\"odinger Equation}}, Phys. Rev. Lett., 130 (2023),
  p.~127201.

\bibitem{dbar2024}
{\sc M.~Bertola, T.~Grava, and G.~Orsatti}, {\em Integrable operators,
  {{\(\bar{\partial}\)}}-problems, {KP} and {NLS} hierarchy}, Nonlinearity, 37
  (2024), p.~33.
\newblock Id/No 085008.

\bibitem{BT2017}
{\sc M.~Bertola and A.~Tovbis}, {\em Maximal amplitudes of finite-gap solutions
  for the focusing nonlinear {Schr{\"o}dinger} equation}, Communications in
  Mathematical Physics, 354 (2017), pp.~525--547.

\bibitem{Bilman1}
{\sc D.~Bilman and R.~Buckingham}, {\em Large-order asymptotics for
  multiple-pole solitons of the focusing nonlinear {Schr{\"o}dinger} equation},
  Journal of Nonlinear Science, 29 (2019), pp.~2185--2229.

\bibitem{Bilman3}
{\sc D.~Bilman, L.~Ling, and P.~D. Miller}, {\em Extreme superposition: rogue
  waves of infinite order and the {Painlev{\'e}}-{III} hierarchy}, Duke
  Mathematical Journal, 169 (2020), pp.~671--760.

\bibitem{Bilman2019}
{\sc D.~Bilman and P.~D. Miller}, {\em A robust inverse scattering transform
  for the focusing nonlinear {Schr{\"o}dinger} equation}, Communications on
  Pure and Applied Mathematics, 72 (2019), pp.~1722--1805.

\bibitem{Biondini2014}
{\sc G.~Biondini and G.~Kova\v{c}i\v{c}}, {\em Inverse scattering transform for
  the focusing nonlinear {S}chr\"odinger equation with nonzero boundary
  conditions}, J. Math. Phys., 55 (2014), pp.~031506, 22.

\bibitem{BiondiniMan}
{\sc G.~Biondini and D.~Mantzavinos}, {\em Long-time asymptotics for the
  focusing nonlinear {S}chr\"odinger equation with nonzero boundary conditions
  at infinity and asymptotic stage of modulational instability}, Comm. Pure
  Appl. Math., 70 (2017), pp.~2300--2365.

\bibitem{Borghese2018}
{\sc M.~Borghese, R.~Jenkins, and K.~D. T.-R. McLaughlin}, {\em Long time
  asymptotic behavior of the focusing nonlinear {Schr{\"o}dinger} equation},
  Annales de l'Institut Henri Poincar{\'e}. Analyse Non Lin{\'e}aire, 35
  (2018), pp.~887--920.

\bibitem{BB}
{\sc T.~Bothner and R.~Buckingham}, {\em Large {D}eformations of the
  {T}racy--{W}idom {D}istribution {I}: {N}on-oscillatory {A}symptotics},
  Commun. Math. Phys,  (2018), pp.~223--263.

\bibitem{BCT}
{\sc T.~Bothner, M.~Cafasso, and S.~Tarricone}, {\em Momenta spacing
  distributions in anharmonic oscillators and the higher order finite
  temperature airy kernel}, Annales de l'Institut Henri Poincar{\'e},
  Probabilit{\'e}s et Statistiques, 58 (2022), pp.~1505 -- 1546.

\bibitem{BoutetdeMonvel2021}
{\sc A.~Boutet~de Monvel, J.~Lenells, and D.~Shepelsky}, {\em The focusing
  {NLS} equation with step-like oscillating background: scenarios of long-time
  asymptotics}, Communications in Mathematical Physics, 383 (2021),
  pp.~893--952.

\bibitem{CCR}
{\sc M.~Cafasso, T.~Claeys, and G.~Ruzza}, {\em Airy {K}ernel {D}eterminant
  {S}olutions of the {K}d{V} {E}quation and {I}ntegro-{D}ifferential
  {P}ainlev{\'e} {E}quation}, Commun. Math. Phys.,  (2021), pp.~1107--1153.

\bibitem{DKZ}
{\sc P.~Deift, A.~Its, A.~Kapaev, and X.~Zhou}, {\em On the algebro-geometric
  integration of the {S}chlesinger equations}, Comm. Math. Phys., 203 (1999),
  pp.~613--633.

\bibitem{DKMVZ}
{\sc P.~Deift, T.~Kriecherbauer, K.~T.-R. McLaughlin, S.~Venakides, and
  X.~Zhou}, {\em Uniform asymptotics for polynomials orthogonal with respect to
  varying exponential weights and applications to universality questions in
  random matrix theory}, Communications on Pure and Applied Mathematics, 52
  (1999), pp.~1335--1425.

\bibitem{DZ}
{\sc P.~Deift and X.~Zhou}, {\em A steepest descent method for oscillatory
  {R}iemann-{H}ilbert problems. {A}symptotics for the {MK}d{V} equation}, Ann.
  of Math. (2), 137 (1993), pp.~295--368.

\bibitem{Demontis}
{\sc F.~Demontis, B.~Prinari, C.~van~der Mee, and F.~Vitale}, {\em The inverse
  scattering transform for the focusing nonlinear {S}chr\"odinger equation with
  asymmetric boundary conditions}, J. Math. Phys., 55 (2014), pp.~101505, 40.

\bibitem{Dieng2019}
{\sc M.~Dieng, K.~D. T.-R. McLaughlin, and P.~D. Miller}, {\em Dispersive
  asymptotics for linear and integrable equations by the
  {{\(\overline{\partial}\)}} steepest descent method}, in Nonlinear dispersive
  partial differential equations and inverse scattering. Papers from the focus
  program on ``Nonlinear Dispersive Partial Differential Equations and Inverse
  Scattering'', Fields Institute, July 31 -- August 18, 2017, New York, NY:
  Springer; Toronto, ON: The Fields Institute for Research in Mathematical
  Scienes, 2019, pp.~253--291.

\bibitem{Faddeev2007}
{\sc L.~D. Faddeev and L.~A. Takhtajan}, {\em Hamiltonian methods in the theory
  of solitons. {Transl}. from the {Russian} by {A}. {G}. {Reyman}}, Class.
  Math., Berlin: Springer, reprint of the 1987 original~ed., 2007.

\bibitem{FB2}
{\sc A.~S. Fokas and M.~J. Ablowitz}, {\em Method of solution for a class of
  multidimensional nonlinear evolution equations}, Phys. Rev. Lett., 51 (1983),
  p.~7?10.

\bibitem{FA1}
\leavevmode\vrule height 2pt depth -1.6pt width 23pt, {\em On the inverse
  scattering transform of multidimensional nonlinear equations related to
  first-order systems in the plane}, Journal of Mathematical Physics, 25
  (1984), pp.~2494--2505.

\bibitem{FIS2005}
{\sc A.~S. Fokas, A.~R. Its, and L.-Y. Sung}, {\em The nonlinear
  {S}chr\"odinger equation on the half-line}, Nonlinearity, 18 (2005),
  pp.~1771--1822.

\bibitem{Girotti2021}
{\sc M.~Girotti, T.~Grava, R.~Jenkins, and K.~D. T.-R. McLaughlin}, {\em
  Rigorous asymptotics of a {KdV} soliton gas}, Communications in Mathematical
  Physics, 384 (2021), pp.~733--784.

\bibitem{GGJMM}
{\sc M.~Girotti, T.~Grava, R.~Jenkins, K.~T.-R. McLaughlin, and A.~Minakov},
  {\em Soliton versus the gas: {Fredholm} determinants, analysis, and the rapid
  oscillations behind the kinetic equation}, Communications on Pure and Applied
  Mathematics, 76 (2023), pp.~3233--3299.

\bibitem{Gohberg_2000}
{\sc I.~Gohberg, S.~Goldberg, and N.~Krupnik}, {\em Traces and Determinants of
  Linear Operators}, Birkh{\"a}user Basel, 2000.

\bibitem{GMO}
{\sc T.~Grava, G.~Orsatti, and A.~Minakov}, {\em Long-time asymptotics for the
  focusing nonlinear {S}chr\"odinger equation with soliton gas initial data},
  In preparation.

\bibitem{Grin1}
{\sc P.~G. Grinevich and S.~V. Manakov}, {\em Inverse problem of scattering
  theory for the two-dimensional {S}chr\"odinger operator, the
  {$\bar\partial$}-method and nonlinear equations}, Funktsional. Anal. i
  Prilozhen., 20 (1986), pp.~14--24, 96.

\bibitem{GrinNovi}
{\sc P.~G. Grinevich and R.~G. Novikov}, {\em Analogues of multisoliton
  potentials for the two-dimensional {S}chr\"odinger operator, and a nonlocal
  {R}iemann problem}, Dokl. Akad. Nauk SSSR, 286 (1986), pp.~19--22.

\bibitem{GrinNovi2}
{\sc P.~G. Grinevich and S.~P. Novikov}, {\em A two-dimensional ``inverse
  scattering problem'' for negative energies, and generalized-analytic
  functions. {I}. {E}nergies lower than the ground state}, Funktsional. Anal. i
  Prilozhen., 22 (1988), pp.~23--33, 96.

\bibitem{Gustafsson}
{\sc B.~Gustafsson}, {\em Quadrature identities and the {Schottky} double},
  Acta Applicandae Mathematicae, 1 (1983), pp.~209--240.

\bibitem{IIKS}
{\sc A.~Its, A.~Izergin, V.~Korepin, and N.~Slavnov}, {\em {Differential
  Equations for Quantum correlation functions}}, International Journal of
  Modern Physics B, 04 (1990), pp.~1003--1037.

\bibitem{IS2013}
{\sc A.~Its and D.~Shepelsky}, {\em Initial boundary value problem for the
  focusing nonlinear {S}chr\"odinger equation with {R}obin boundary condition:
  half-line approach}, Proc. R. Soc. Lond. Ser. A Math. Phys. Eng. Sci., 469
  (2013), pp.~20120199, 14.

\bibitem{Kay1955}
{\sc I.~Kay and H.~E. Moses}, {\em The determination of the scattering
  potential from the spectral measure function. {I}-{III}}, Il Nuovo Cimento,
  X. Series, 2 (1955), pp.~917--961.

\bibitem{Koroktin}
{\sc D.~Korotkin}, {\em Solution of matrix {R}iemann-{H}ilbert problems with
  quasi-permutation monodromy matrices}, Math. Ann., 329 (2004), pp.~335--364.

\bibitem{KlD}
{\sc A.~Krajenbrink and P.~L. Doussal}, {\em Replica bethe ansatz solution to
  the kardar-parisi-zhang equation on the half-line}, SciPost Phys., 8 (2020),
  p.~035.

\bibitem{McLaughlin2008}
{\sc K.~T.-R. McLaughlin and P.~D. Miller}, {\em The {{\(\bar{\partial}\)}}
  steepest descent method for orthogonal polynomials on the real line with
  varying weights}, IMRN. International Mathematics Research Notices, 2008
  (2008), p.~66.
\newblock Id/No rnn075.

\bibitem{Monvel2022a}
{\sc A.~B.~d. Monvel, J.~Lenells, and D.~Shepelsky}, {\em The focusing {NLS}
  equation with step-like oscillating background: the genus {{\(3\)}} sector},
  Communications in Mathematical Physics, 390 (2022), pp.~1081--1148.

\bibitem{Mumford2007}
{\sc D.~Mumford}, ed., {\em Tata Lectures on Theta I}, Springer eBook
  Collection, Birkhäuser Boston, Boston, MA, 2007.

\bibitem{Mumford2008}
{\sc D.~Mumford}, {\em Abelian varieties}, no.~5 in Tata Institute of
  Fundamental Research: Studies in mathematics, Hindustan Book Agency, New
  Delhi, 2. ed., corr. repr.~ed., 2008.
\newblock Hier auch später erschienene, unveränderte Nachdrucke.

\bibitem{Simon}
{\sc B.~Simon}, {\em Operator Theory}, American Mathematical Society, nov 2015.

\bibitem{ZS1971}
{\sc V.~E. Zakharov and A.~B. Shabat}, {\em Exact theory of two-dimensional
  self-focusing and one-dimensional self-modulation of waves in nonlinear
  media}, Zh. Eksp. Teor. Fiz, 61 (1971), pp.~118--134.

\bibitem{Zhu2023}
{\sc J.~Zhu, X.~Jiang, and X.~Wang}, {\em Nonlinear {Schr{\"o}dinger} equation
  with nonzero boundary conditions revisited: {Dbar} approach}, Analysis and
  Mathematical Physics, 13 (2023), p.~23.
\newblock Id/No 51.

\end{thebibliography}
\bibliographystyle{siam}

\end{document}